\newtheorem{theorem}{\bf Theorem}
\newtheorem{proposition}{\bf Proposition}
\newcommand{\defeq}{\mathrel{\mathop:}=}
\def\R{\texttt{R}}
\def\iid{\overset{\text{iid}}{\sim}}
\def\limma{\texttt{limma}}
\def\voom{\texttt{voom}}
\def\edgeR{\texttt{edgeR}}
\def\EbayesThresh{\texttt{EbayesThresh}}
\def\REBayes{\texttt{REBayes}}
\def\Rmosek{\texttt{Rmosek}}
\def\deconvolveR{\texttt{deconvolveR}}
\def\cash{\texttt{cashr}}
\def\ash{\texttt{ashr}}
\def\qvalue{\texttt{qvalue}}
\def\locfdr{\texttt{locfdr}}
\begin{document}

\title{Solving the Empirical Bayes Normal Means Problem with Correlated Noise}

\author[1]{Lei Sun}
\author[1, 2]{Matthew Stephens\thanks{Contact: Matthew Stephens (\href{mailto:mstephens@uchicago.edu}{\texttt{mstephens@uchicago.edu}}).
    This work is partially supported by National Institutes of Health Grant HG002585 to M. S.}}
\affil[1]{Department of Statistics, University of Chicago}
\affil[2]{Department of Human Genetics, University of Chicago}
% \affil[*]{Contact: \href{mailto:mstephens@uchicago.edu}{\texttt{mstephens@uchicago.edu}}}

\date{}

% \author{}
% \date{\vspace{-10ex}}

\maketitle

\begin{abstract}
The Normal Means problem plays a fundamental role in many areas of
modern high-dimensional statistics, both in theory and practice. And the Empirical Bayes (EB) approach to solving this problem has been shown to be highly effective, again both in theory and practice. However, almost all EB treatments of the Normal Means problem assume that the observations are independent. In practice correlations are ubiquitous in real-world applications, and these correlations can grossly distort EB estimates. Here, exploiting theory from \cite{schwartzman2010},
we develop new EB methods for solving the Normal Means problem
that take account of {\it unknown} correlations among observations. We provide practical software implementations of these methods, and illustrate them in the context of large-scale multiple testing problems and False Discovery Rate (FDR) control. In realistic numerical experiments our methods compare favorably with other commonly-used multiple testing methods.
\end{abstract}

%-------------------------------------------------%
\section{Introduction} \label{sec:intro}
%-------------------------------------------------%

We consider the Empirical Bayes (EB) approach to the Normal Means problem \citep{efron1973,johnstone2004}:
\begin{equation}
X_j \mid \theta_j, s_j 
%\mid \theta_j, s_j, Z_j 
\sim N(\theta_j,s_j^2) \ , \qquad j=1, \dots, p \ .
\label{eq:nm}
\end{equation}
% or equivalently
% \begin{align}
% X_j &= \theta_j + s_j Z_j \ , \qquad j=1, \dots, p;\\
% Z_j &\sim N(0,1). \label{eq:nm2}
% \end{align}
Here $N(\mu,\sigma^2)$ denotes the normal distribution with mean $\mu$
 and variance $\sigma^2$;
$X \defeq (X_1, \ldots, X_p)$ are observations; $s \defeq (s_1, \ldots, s_p)$ are standard deviations that are assumed known; and $\theta \defeq (\theta_1, \ldots, \theta_p)$ 
 are unknown
 means to be estimated. The EB approach
 assumes that $\theta_j$ are independent and identically distributed (iid) from
 some ``prior'' distribution,
\begin{equation}
  \theta_j \iid g(\cdot) \ , \qquad j=1, \dots, p \ ;
\label{eq:g}
\end{equation}
and performs inference for $\theta_j$ in two steps:
first obtain an estimate of $g$, $\hat{g}$ say, and second compute the posterior distributions $p(\theta_j|X_j, s_j, \hat{g})$.  We refer to the two-step process as ``solving the Empirical Bayes Normal Means (EBNM) problem.'' The first step, estimating $g$, is sometimes
of direct interest in itself, and is an example of a  ``deconvolution'' problem \citep[e.g.][]{kiefer1956,laird1978,kerneldeconv1990, fan1991,cordy1997,bovy2011,efron2016}.
 
% We consider the Empirical Bayes Normal Means (EBNM) problem, by which we mean the problem of fitting the model:
% \begin{align}
% X_j | \theta_j, s_j 
% %\mid \theta_j, s_j, Z_j 
% &\sim N(\theta_j,s_j^2) \qquad j=1,\dots,p;\\
% \theta_j &\sim g(\cdot) \qquad j=1,\dots,p;
% \label{eq:ebnm}
% \end{align}
% where $N(\mu,\sigma^2)$ denotes the normal distribution with mean $\mu$
%  and variance $\sigma^2$.
%  Here $X \defeq \{X_1, \ldots, X_p\}$ are observations, $s \defeq \{s_1, \ldots, s_p\}$ are standard deviations that are assumed known, $\theta \defeq \{\theta_1, \ldots, \theta_p\}$ 
%  are unknown
%  means and $g$ is a ``prior" distribution, also considered unknown.
%  Solving the EBNM problem involves two steps:
%  first, obtain an estimate of $g$ ($\hat{g}$ say), and second compute the posterior distributions $p(\theta_j | X_j, s_j, \hat{g})$.
 
First named by \cite{robbins1956}, 
%Ever since Robbins 1956 and Stein, efron morris, 
EB methods have seen extensive theoretical study \citep[e.g.][]{robbins1964,morris1983,efron1996,jiang2009,brown2009,scott2010,petrone2014,rousseau2017,efron2018},
and are becoming widely used in practice.
Indeed, according to \cite{efron_casi}, ``large parallel data sets are a hallmark of twenty-first-century scientific investigation, promoting the popularity of empirical Bayes methods.'' 

The EB approach provides a particularly attractive solution to the Normal Means problem.
%\citep{johnstone2004}
For example, the posterior means of $\theta$ provide shrinkage point estimates, with all the accompanying risk-reduction benefits
\citep{efron1972,berger1985}. And the posterior distributions for $\theta$ provide corresponding ``shrinkage'' interval estimates, which can have
good coverage properties even ``post-selection'' \citep{dawid1994,stephens2017}.
Further, by estimating $g$, EB methods ``borrow strength'' across observations, and automatically determine an appropriate amount of shrinkage from the data \citep{johnstone2004}. Because of these benefits, methods for solving the EBNM problem -- and related extensions -- are increasingly used in data applications \citep[e.g.][]{clyde2000,johnstone2005,brown2008,koenker2014,smash,mash,flash,corshrink}. One application of EBNM methods that we pay particular attention to later is large-scale multiple testing, and estimation/control of the False Discovery Rate \citep[FDR;][]{benjamini1995,efron_lsi, muralidharan2010,stephens2017,mouthwash}.
%as \cite{efron2003} points out, in the current era of large-scale statistical inference, ``we are poised for an avalanche of empirical Bayes applications.''
%is widely used in modern statistics to
%$s_j$ are assumed known.

%$g$ is usually assumed to be a mixture of two groups
%\begin{align}
%g(\cdot) = \pi_0 \delta_0 + (1-\pi_0) g_1(\cdot)
%\end{align}
%Common approaches include Gaussian priors \citep{efron1973}, , %exponential family priors \citep{efron2016}. 

Almost all existing treatments of the EBNM problem assume that the observations $X$ in \eqref{eq:nm}
are independent given $\theta, s$.
However, this assumption can be grossly violated in practice. Non-negligible correlations are common in real world data sets. Further, as we discuss later, EB approaches to the Normal Means problem are particularly vulnerable to being misled by pervasive correlations. Specifically, when the average strength of pairwise correlations among observations is non-negligible, the estimate $\hat g$ of $g$ can be very inaccurate, and this adversely affects inference for {\it all} $\theta$. Ironically then, the attractive ``borrowing strength'' property of the EB approach becomes, in the presence of pervasive correlations, its Achilles' heel.

In this paper we introduce methods for solving the EBNM problem {\it allowing for unknown correlations} among the observations. 
More precisely, rewriting \eqref{eq:nm} as
\begin{align}
\begin{split}
X_j
%\mid \theta_j, s_j, Z_j 
&= \theta_j + s_jZ_j \\
% \theta_j &\iid g(\cdot) \\
Z_j &\sim N(0,1) \ ,
\end{split}
\label{eq:ebnm}
\end{align}
% where $Z \defeq \{Z_1, \ldots, Z_p\}$ are identically distributed, $N(0,1)$ noise, but with unknown correlation.  
we develop methods that allow for unknown correlations among the ``noise'' $Z \defeq (Z_1, \ldots, Z_p)$.
Our methods are
built on elegant theory from \cite{schwartzman2010}, who shows, in essence, that the limiting empirical distribution, $f$ say, of correlated $N(0,1)$ random variables can be represented using a basis of the standard Gaussian density and its derivatives of increasing order. 
We use this result, combined with an assumption that $Z$ are exchangeable, to frame solving this ``EBNM with correlated noise'' problem
 as a two-step process: first 
 {\it jointly estimate $f$ and $g$} from all observations; and second compute the posterior distribution of $\theta_j$ given the estimated $\hat f,\hat g$ (and $X_j,s_j$). Although many possible assumptions on $g$ are possible, here we assume $g$ to be a scale mixture of zero-mean Gaussians, following the
flexible ``adaptive shrinkage'' approach in \cite{stephens2017}. We have implemented these methods in an \R{} package, \cash{} (``correlated 
adaptive shrinkage in \R{}''), available from \url{https://github.com/LSun/cashr}. 

%  develop a correlated adaptive shrinkage  methodology to solve this problem, and illustrate its effectiveness using numerical examples. In particular we apply \cash{} in the context of large scale multiple testing and false discovery rate estimation, and compare them with other approaches to this important problem.

The rest of the paper is organized as follows.
In Section \ref{sec:motivation}, we illustrate how correlation can derail existing EBNM methods, and review \cite{schwartzman2010}'s representation of the empirical distribution of correlated $N(0,1)$ random variables. In Section \ref{sec:cash} we introduce the exchangeable correlated noise (ECN) model, and describe methods to solve the EBNM with correlated noise problem. Section \ref{sec:examples} provides numeric examples with realistic simulations and real data illustrations. Section \ref{sec:disc} concludes and discusses future research directions.

\section{Motivation and Background} \label{sec:motivation}

\subsection{Correlation distorts empirical distribution and misleads EBNM methods} \label{sec:distortion}

In essence, the reason correlation
can cause problems for EBNM methods is
that, even with large samples, the empirical distribution of correlated variables can be quite different from their marginal distribution \cite[e.g.][]{efron2007jasa}.
To illustrate this,
we generated realistic correlated $N(0, 1)$ $z$-scores using a framework
similar to 
\cite{gerard.ruv,mouthwash,mengyin_thesis}.
%These $z$-scores were simulated from real data as follows.
Specifically, we took RNA-seq
gene expression data on the $10^4$ most highly expressed genes in 119 human liver tissues \citep{gtex2015,gtex2017}. In each simulation we randomly drew two groups of five samples (without replacement), and applied a standard RNA-seq analysis pipeline, using the software packages \edgeR{} \citep{edgeR}, \voom{} \citep{voom}, and \limma{} \citep{limma}, to compute, for each gene $j = 1, \ldots, 10^4$, an estimate of the $\log_2$-fold difference in mean expression, $X_j$, and a corresponding $p$-value, $p_j$, testing the null hypothesis that the difference in mean is 0. We converted $p_j,X_j$ to a $z$-score $z_j := -\text{sign}(X_j)\Phi^{-1}(p / 2)$, where $\Phi$ is the CDF of $N(0, 1)$. We also computed an ``effective'' standard deviation $s_j \defeq X_j / z_j$ for later use (Figure \ref{fig:deconv} and Section \ref{sec:examples}).

% A very similar pipeline is used in \cite{gerard.stephens} and \cite{luthesis} to generate examples of test statistics under realistic (but unknown) correlation structure. \cite{luthesis} also applies this pipeline to generate {\it uncorrelated} $z$ scores, by randomly choosing samples {\it independently} at each gene --
% and shows that under these conditions it produces
% well-behaved $z$ scores with empirical distribution very close to $N(0,1)$.

In these simulations, because samples are randomly assigned to the two groups, there are no genuine differences in mean expression. Therefore the $z$-scores 
should have marginal distribution $N(0,1)$. And, indeed, empirical checks confirm that the analysis pipeline produces well-calibrated marginally $N(0,1)$ $z$-scores (Appendix \ref{sec:marginal_N01}).
%And, indeed, when aggregated across all simulated data sets, the histogram of $z$-scores generated by this process closely matches $N(0,1)$ 
%showing that the analysis pipeline produces well-calibrated null $z$-scores. 
However, the $10^4$ $z$ scores in each simulated data set are correlated, due to correlations among genes, 
and such correlations can distort the 
empirical distribution so that it is very different from $N(0,1)$ \citep{efron2007jasa,efron2010,efron_lsi}.
Figure \ref{fig:cor_z} shows four examples, which were chosen to highlight some common patterns. 
 Panels (a-c) all exhibit a feature we call {\it pseudo-inflation}, where
 the empirical distribution is {\it more} dispersed than $N(0,1)$. Conversely, 
 panel (d) exhibits {\it pseudo-deflation}, where the
 empirical distribution is {\it less} dispersed than $N(0,1)$. Panel (b) also exhibits skew.

%%%%%%%%%%%%%%%%%%%%%%%%%%%%%%%%%%%%%%%%%%%%%%%%
\begin{figure*}[!htb]
\begin{center}
\includegraphics[width = 0.9\linewidth]{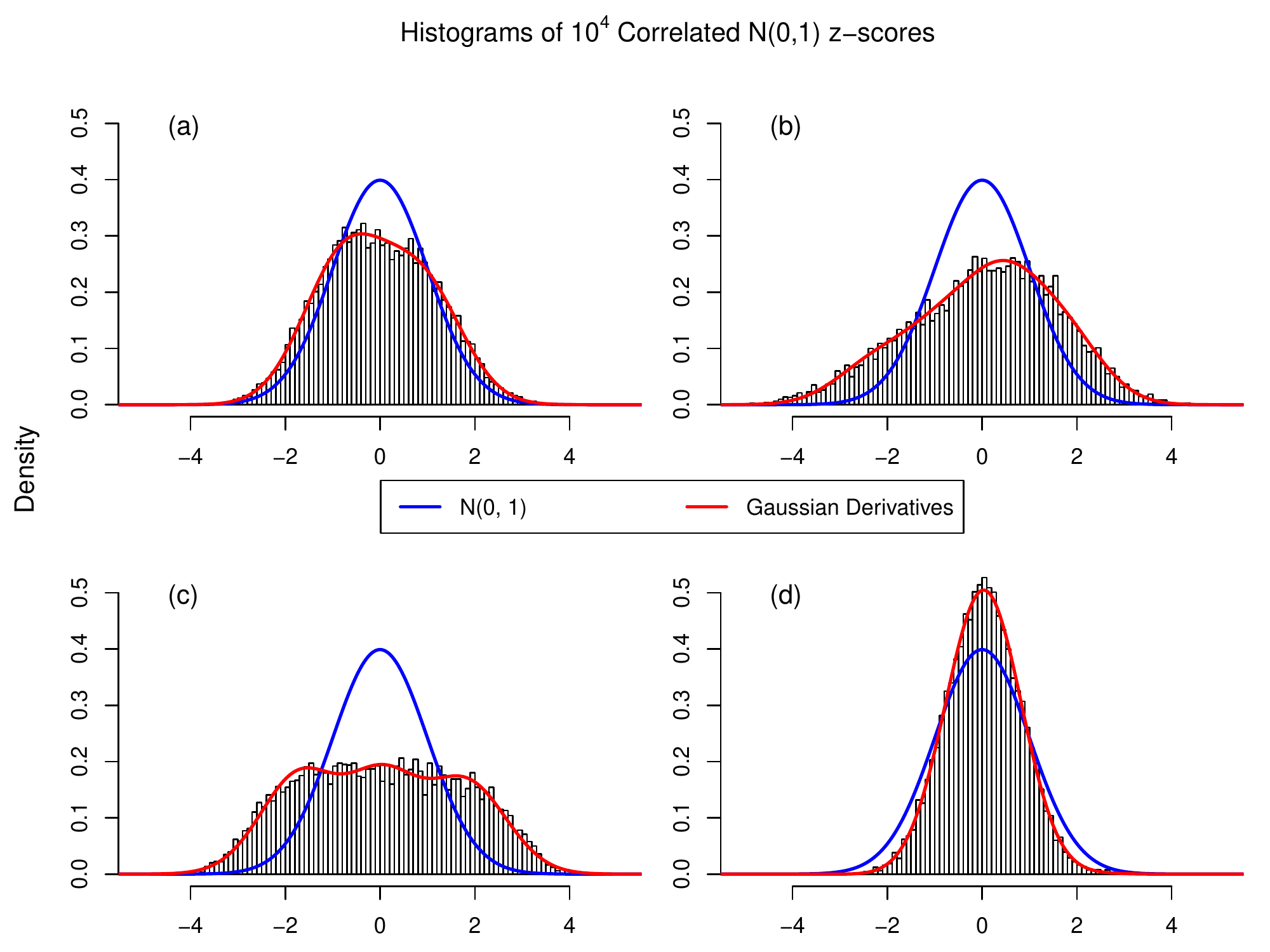}
\caption{Illustration that the empirical distribution of a large number of correlated and marginally $N(0,1)$ null $z$-scores can deviate substantially from $N(0, 1)$. The red lines are fitted densities obtained using our ``Exchangeable Correlated Noise" model (Section \ref{fitting_z}) which uses a linear combination of the standard Gaussian density and its standardized derivatives.}
\label{fig:cor_z}
\end{center}
\end{figure*}
%%%%%%%%%%%%%%%%%%%%%%%%%%%%%%%%%%%%%%%%%%%%%%%%

 Such {\it correlation-induced distortion} of the empirical distribution, if not appropriately addressed, can have serious consequence for EBNM methods. To illustrate this we applied several EBNM methods to five data sets simulated
  %For example, many multiple testing methods take advantage of the assumption that the empirical distribution of null $z$-scores ($p$-values) is close to their marginal distribution of $N(0, 1)$ (Unif$[0, 1]$) \citep{storey2004}.
according to \eqref{eq:g}-\eqref{eq:ebnm} as follows:
\begin{itemize}
\item The $p = 10^4$ normal means $\theta$ are iid samples from the mixture
$g(\cdot) = 0.6\delta_0(\cdot) + 0.3N(\cdot;0, 1) + 0.1N(\cdot; 0, 3^2)$,
where $\delta_0(\cdot)$ denotes a point mass on $0$ whose coefficient ($0.6$) is the null proportion, and $N(\cdot ; \mu, \sigma^2)$ denotes the Gaussian density with mean $\mu$ and variance $\sigma^2$. The same $\theta$ are used in all five data sets.
\item In the first four data sets, the noise variables, $Z$, are the correlated null $z$-scores from the four panels of Figure \ref{fig:cor_z}. In the fifth data set $Z$ are iid $N(0,1)$ samples.
\item The standard deviations $s$ are simulated from the RNA-seq gene expression data as described above, and $s$ are scaled to have $\frac1{p}\sum_j s_j^2 = 1$.
\end{itemize}
We provide the simulated $X,s$ values to four
existing EBNM methods -- \EbayesThresh{} \citep{johnstone2004,EbayesThresh}, \REBayes{} \citep{koenker2014,koenker2017}, \ash{} \citep{stephens2017}, and \deconvolveR{} \citep{efron2016,deconvolveR} --
that all ignore correlation and assume independence among observations. (For \deconvolveR{} we set $s_j \equiv 1$ as its current implementation supports only homoskedastic noise.)

The estimates of $g$ obtained by each method are shown in Figure \ref{fig:deconv}.
All methods do reasonably well in the fifth data set where $Z$ are indeed independent (panel (e)). However, in the correlated data sets (panels (a-d)) the methods all misbehave in a similar way: over-estimating the dispersion of $g$ under pseudo-inflation, and under-estimating it under pseudo-deflation. Their estimates of the null proportion are particularly inaccurate. These adverse effects are visible even when the distortion appears not severe (e.g. Figure \ref{fig:cor_z}(a)). 

%%%%%%%%%%%%%%%%%%%%%%%%%%%%%%%%%%%%%%%%%%%%%%%%
\begin{figure*}[!htb]
\begin{center}
\includegraphics[width = \linewidth]{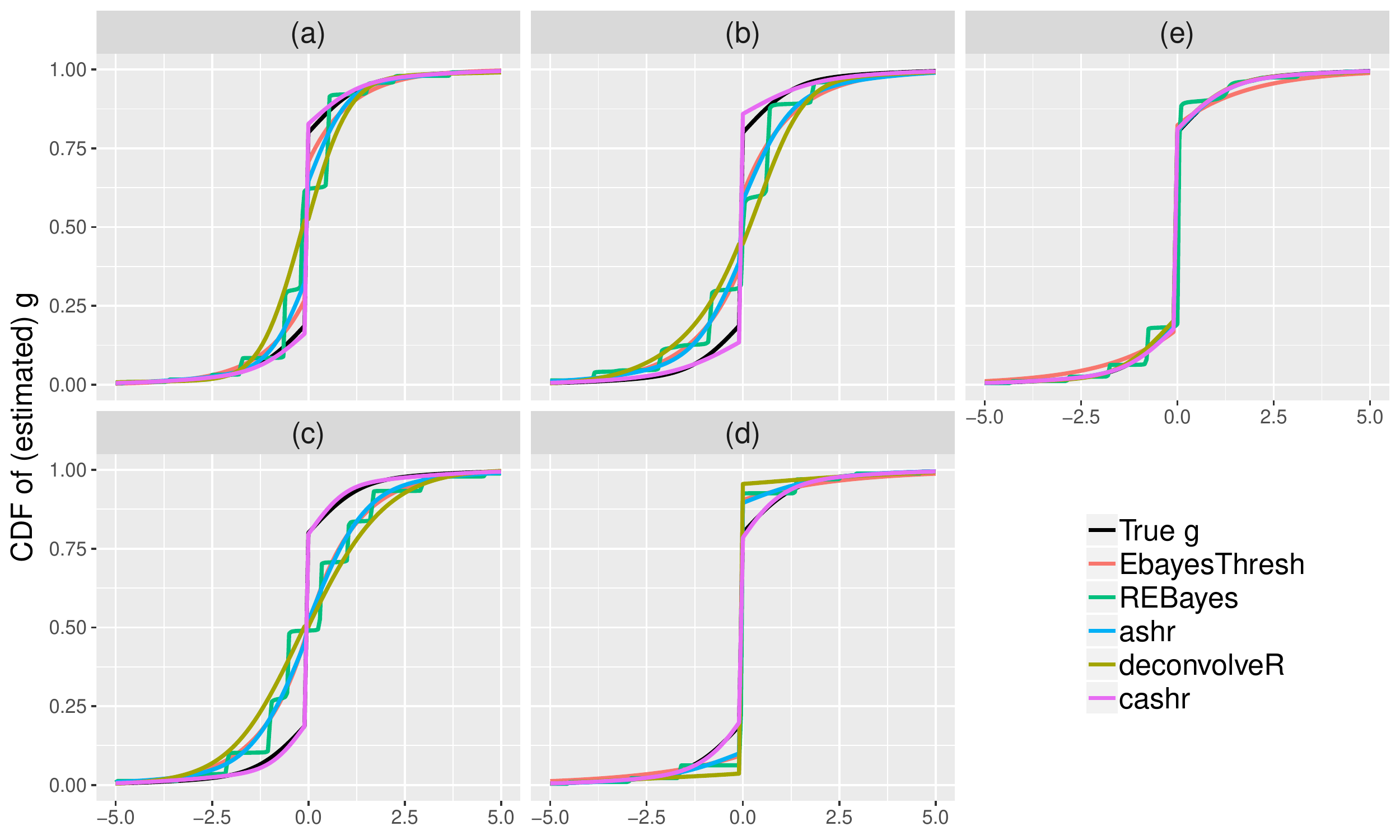}
\caption{Illustration of how correlation can distort estimates of $g$ obtained by EBNM methods. Each panel compares the true $g$ with the estimated $g$ from several EBNM methods applied to the same simulated dataset (see main text for details). In panels (a-d) $Z$ are the correlated null $z$-scores from the corresponding panels of Figure \ref{fig:cor_z}. In panel (e) $Z$ are iid $N(0,1)$ samples.  Existing EBNM methods (\EbayesThresh{}, \REBayes{}, \ash{}, \deconvolveR{}), which ignore correlation among observations, do reasonably well with iid noise (e). However they do much less well in the correlated cases (a-d): over-estimating the dispersion of $g$ under pseudo-inflation (a-c) and under-estimating it under pseudo-deflation (d).   In contrast, our new method \cash{} (Section \ref{sec:cash}) estimates $g$ consistently well.}
\label{fig:deconv}
\end{center}
\end{figure*}
%%%%%%%%%%%%%%%%%%%%%%%%%%%%%%%%%%%%%%%%%%%%%%%%%

As a taster for what is to come, Figure \ref{fig:deconv} also shows the results from our new method, \cash{}, described later. This new method can account for both pseudo-inflation and pseudo-deflation, and in these examples estimates $g$ consistently well.

\subsection{Pseudo-inflation is non-Gaussian} \label{sec:nongaussian}

In a series of pioneering papers \citep{efron2004,efron2007jasa,efron2007aos,efron2008,efron2010}, Efron studied the impact of correlations among $z$-scores on EB approaches to multiple testing. 
To account for
the effects of correlation (pseudo-inflation, pseudo-deflation, and skew) on the empirical distribution of null $z$-scores he introduced the concept of an ``empirical null.'' In his \locfdr{} method \citep{locfdr}, the empirical null is assumed to be Gaussian $N(\mu_0, \sigma_0^2)$. However, theory suggests that pseudo-inflation is not well modelled by a Gaussian distribution \citep[][reviewed in Section \ref{sec:cor_z}]{schwartzman2010},
and a closer look at our
empirical results here supports this conclusion.

To illustrate, Figure \ref{fig:shoulder} shows more detailed analysis of the empirical distribution of Figure \ref{fig:cor_z}(c) $z$-scores. The central part of this $z$-score
distribution could perhaps be modelled by
a Gaussian distribution with inflated variance -- for example, it matches more closely to a $N(0,1.6^2)$ than to $N(0,1)$. However,
in the tails, the empirical distribution has much shorter tails than $N(0,1.6^2)$. For example, $10^4$ iid $N(0,1.6^2)$ samples would be 
expected to yield 43 $p$-values exceeding the Bonferroni threshold of $0.05/10^4$, whereas in fact we observe none here. In short, the effects of pseudo-inflation are primarily in the ``shoulders'' of the distribution, where $|z|$-scores are only moderately large, and not in the tails. (Incidentally, this behavior is far more evident in the histogram of $z$-scores than in the histogram of corresponding $p$-values, and we find the $z$-score histogram generally more helpful for diagnosing potential correlation-induced distortion.)

%Indeed, there are $809$ out of $10,000$ $p$-values that are lower than $0.005$, a decently small threshold, compared with $50$ as expected under independence. However, Figure \ref{fig:shoulder}(d) indicates that only a couple of them are lower than the $p$-value of $\sqrt{2\log{n}}$, an upper bound of the expected value of $\max\limits_{1\le j \le n}|Z_j|$ when $\left\{Z_j\right\}_1^n$ are i.i.d. $N\left(0, 1\right)$ and the so-called ``universal threshold'' \citep{donoho1994}, and none of them is lower than $0.05 / n$, the Bonferroni correction of significance level $0.05$. In contrast, Figure \ref{fig:shoulder}(f) shows the $p$-values of $10,000$ i.i.d. $N\left(0, 1.6^2\right)$ variates under the global null hypothesis. In this case, we also have around $800$ $p$-values lower than $0.005$, but the number of them lower than the Bonferroni correction is much higher. It suggests that the inflation caused by correlation is sub-Gaussian on the tails.

% Figure \ref{fig:shoulder} illustrates the empirical distribution of the pseudo-inflated null $z$-scores in Figure \ref{fig:cor_z}(c). 
%number of extremely large $|z|$ scores are more or less comparable with expectations under independent $N(0, 1)$ variates.

%%%%%%%%%%%%%%%%%%%%%%%%%%%%%%%%%%%%%%%%%%%%%%%%
\begin{figure*}[!htb]
\begin{center}
\includegraphics[width = 0.375\linewidth]{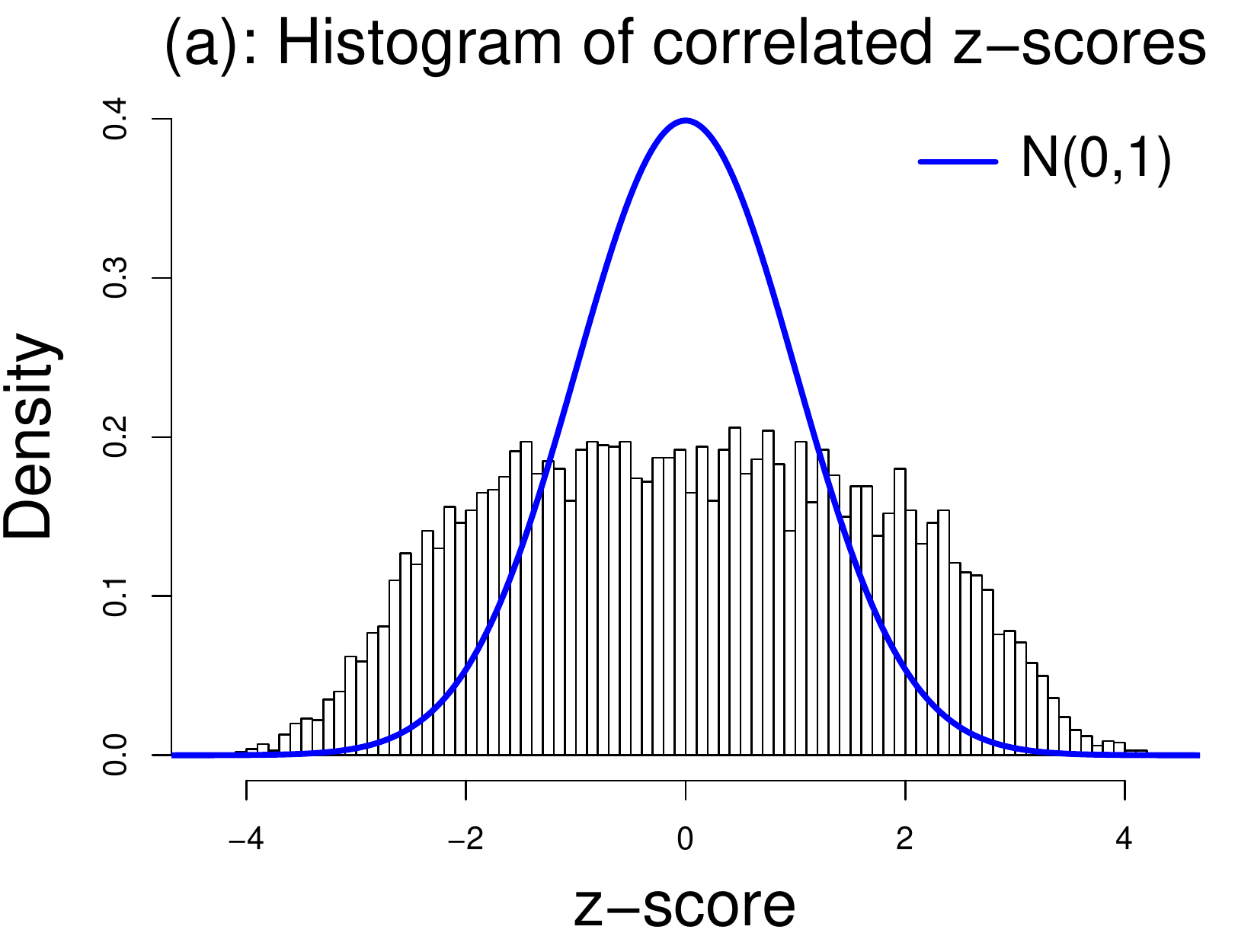}
\includegraphics[width = 0.375\linewidth]{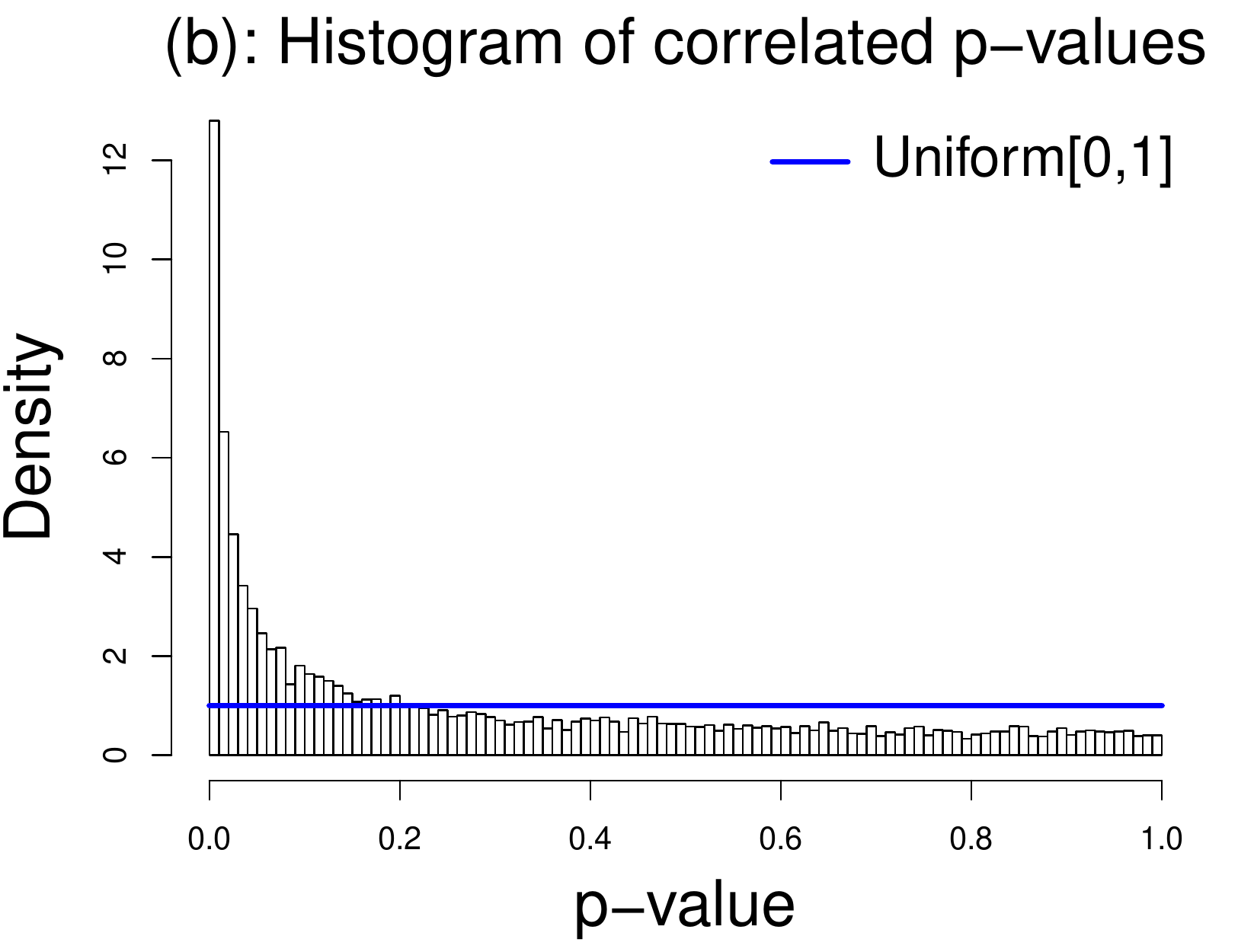} \\
\vspace{0.5cm}
\includegraphics[width = 0.3\linewidth]{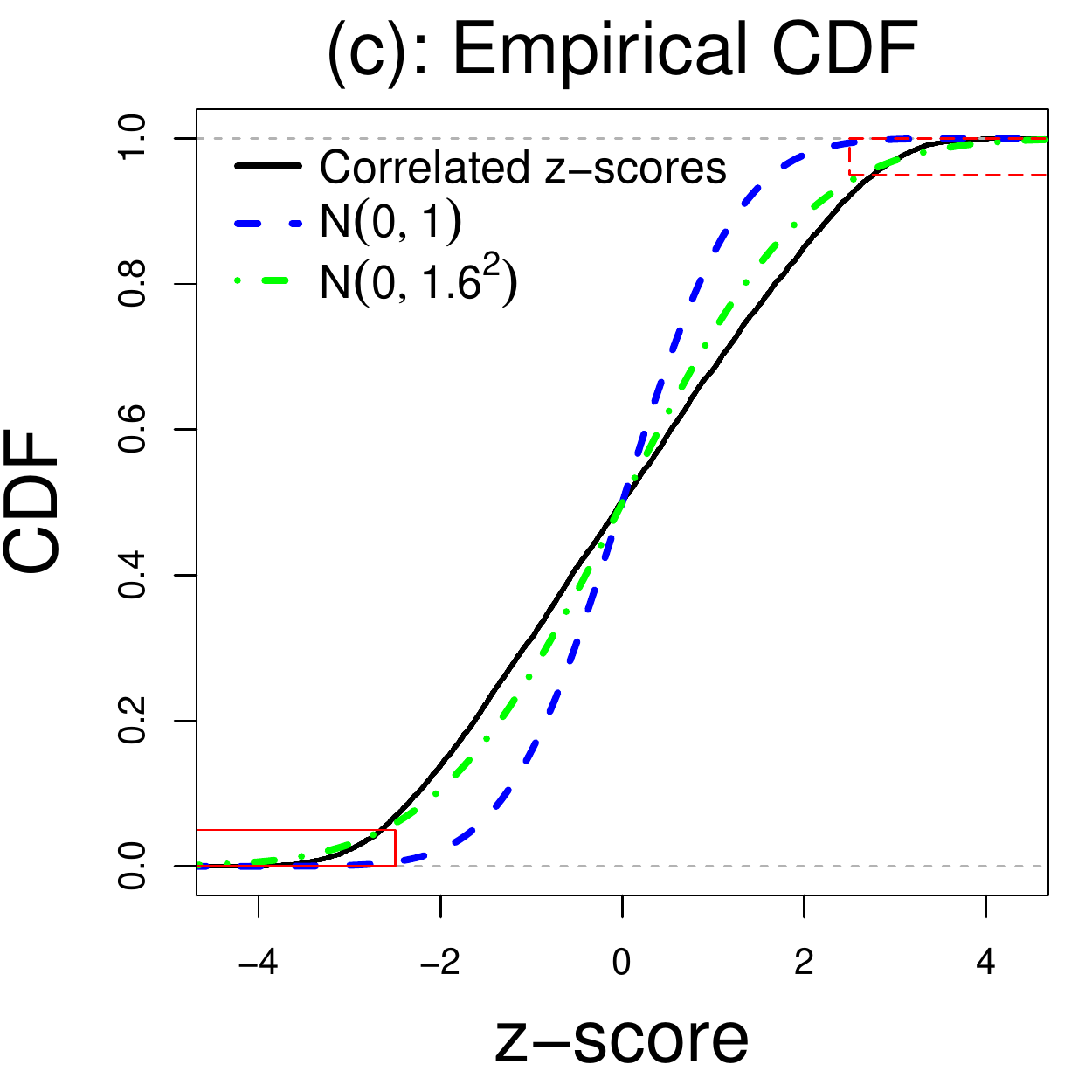}
\includegraphics[width = 0.3\linewidth]{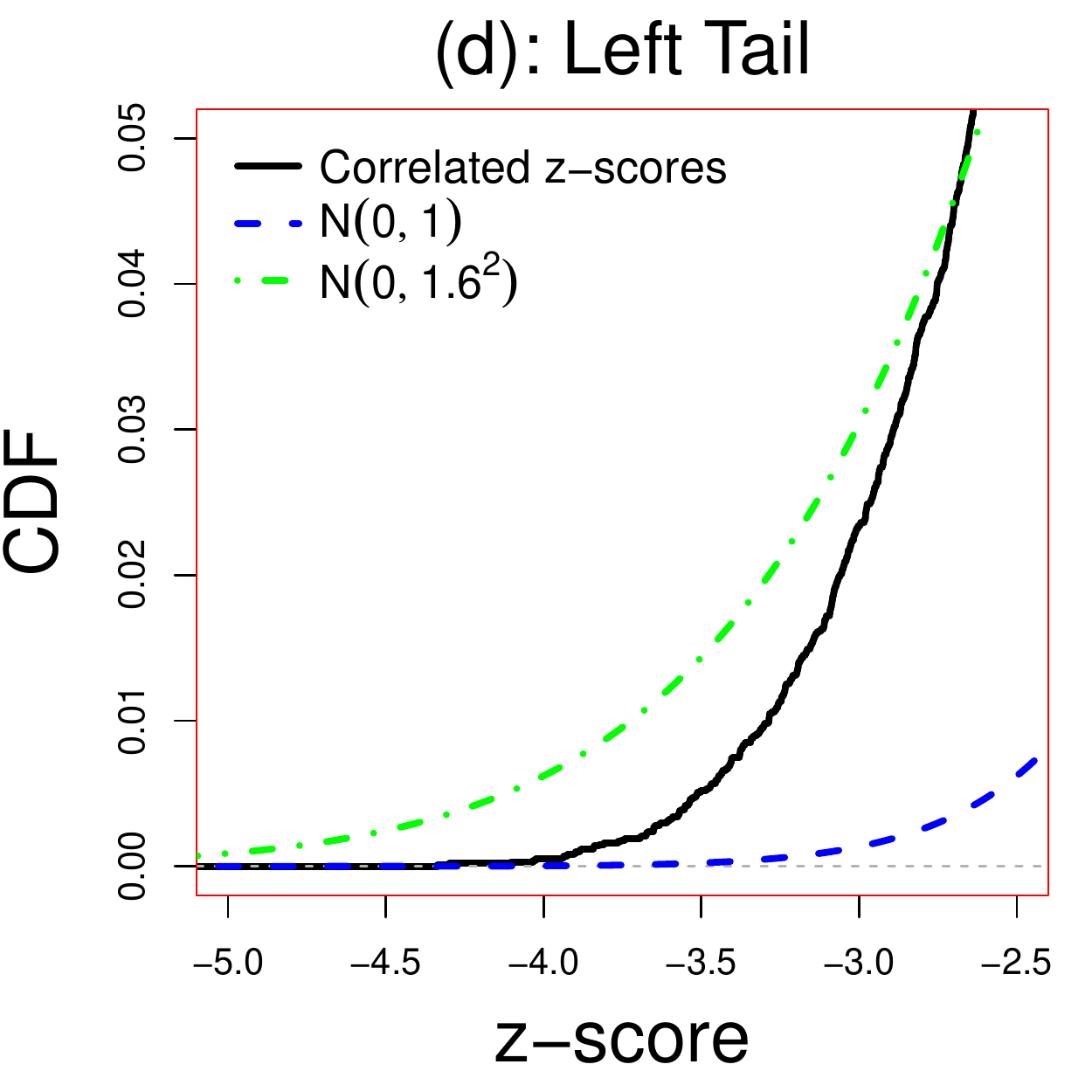}
\includegraphics[width = 0.3\linewidth]{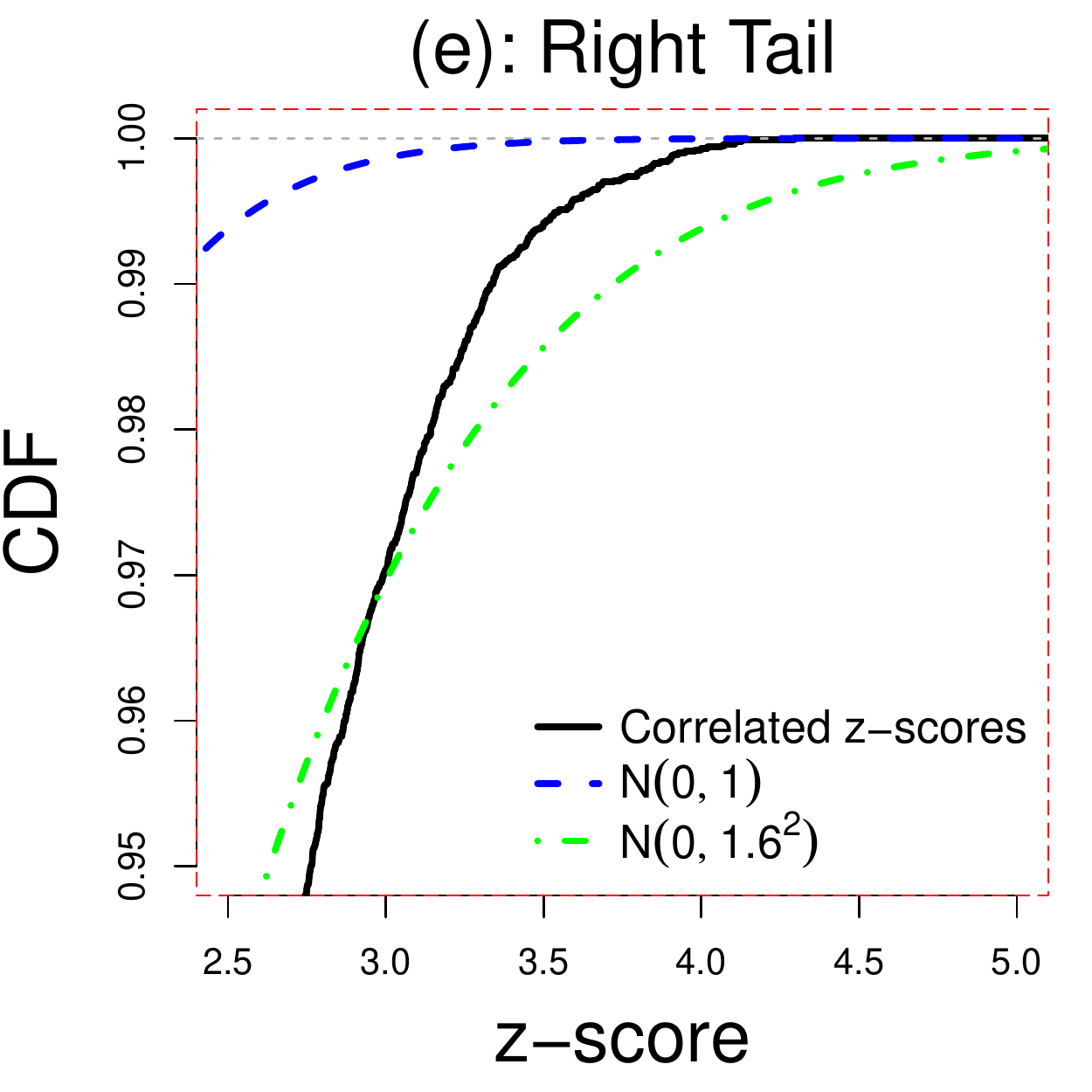} \\ \vspace{0.5cm}
\includegraphics[width = 0.3\linewidth]{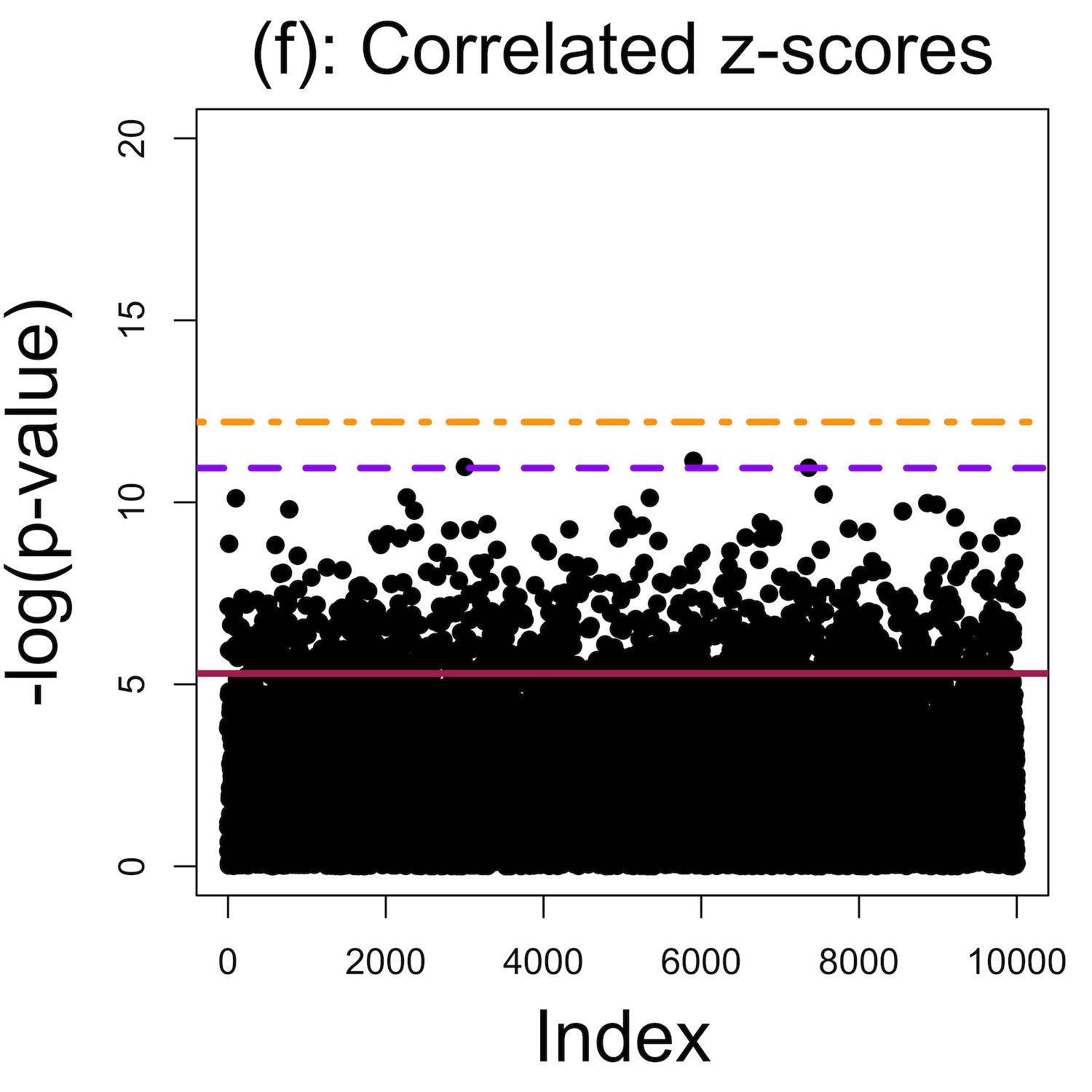}
\includegraphics[width = 0.3\linewidth]{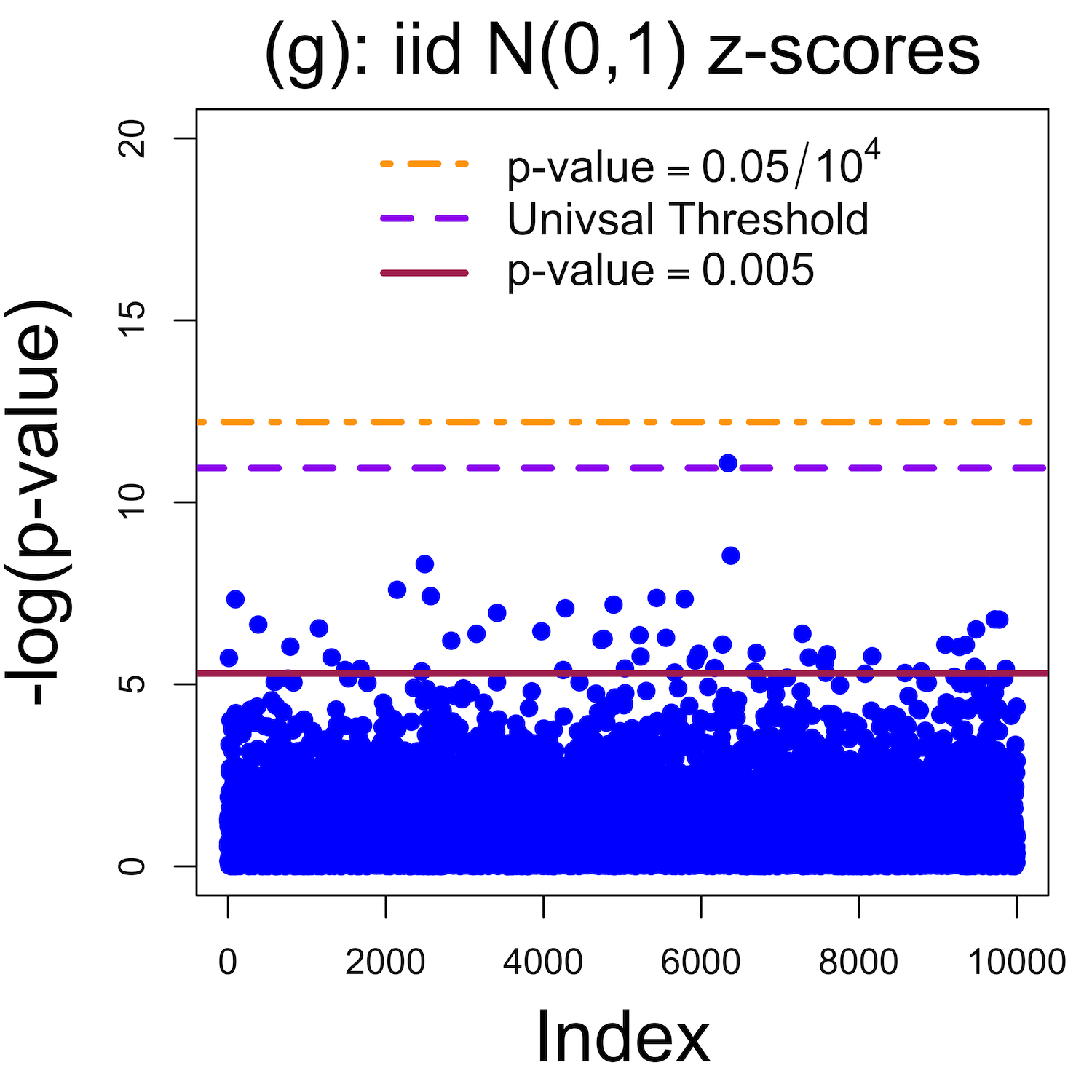}
\includegraphics[width = 0.3\linewidth]{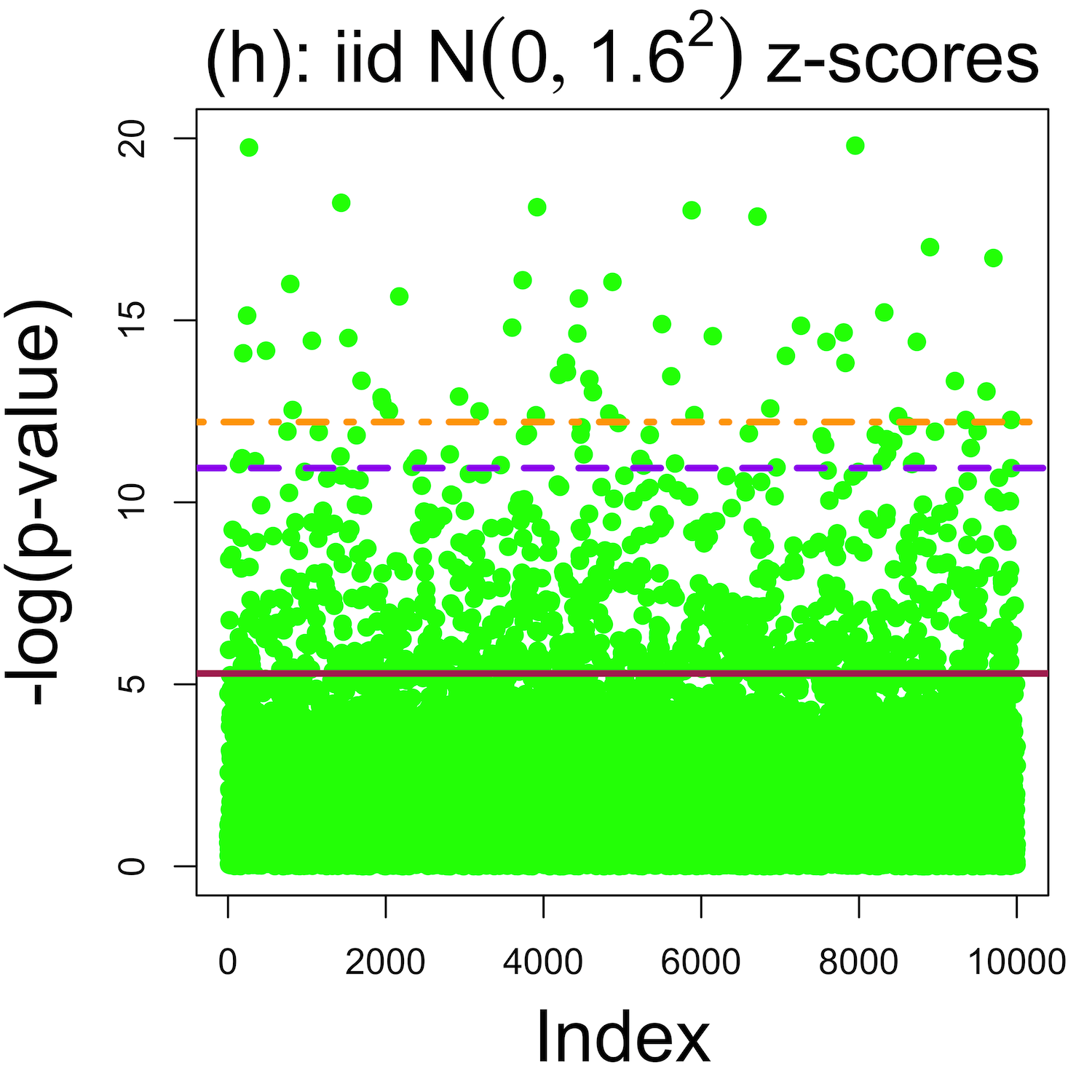} \\ \caption{Illustration that the effects of pseudo-inflation are primarily in the ``shoulders'' of the distribution of null $z$-scores, and not in the tails.
%In addition to $N(0, 1)$, we compare the correlated null $z$-scores in the panel (c) of Figure \ref{fig:cor_z} to $N(0, 1.6^2)$, as both have $~800$ corresponding $p$-values $\le 0.005$.
Panels (a-b): Histograms of correlated $z$-scores (from Figure \ref{fig:cor_z}(c)) and their corresponding $p$-values. Note that the "shoulder-but-not-tail" inflation is evident in the histogram of $z$-scores (a) but not in the oft-used histogram of $p$-values (b).
Panels (c-e): Comparison of the empirical CDF of correlated $z$-scores with the CDF of $N(0, 1)$ and $N(0, 1.6^2)$. 
The $z$-score distribution is closer to $N(0,1.6^2)$ in the center, but closer to $N(0,1)$ in the tails.
%Pseudo-inflation behaves more like the latter in the shoulders but more like the former in the tails.
Panels (f-h): Comparison of correlated $p$-values   
with $p$-values obtained from $10^4$ iid  $N(0, 1)$ and $N(0, 1.6^2)$ $z$-scores.
The number of correlated $p$-values $\le 0.005$ is closer to $z$ scores from
$N(0,1.6^2)$, but the number in the extreme tail (e.g. clearing the Bonferroni or universal thresholds) is closer to $N(0,1)$.
%in each case are approximately $800, 50$ and $800$. 
%Like $N(0, 1.6^2)$, pseudo-inflation finds it easy to pass the $0.005$ level, but like $N(0, 1)$, 
%unlike $N(0, 1.6^2)$, pseudo-inflation in Figure \ref{fig:cor_z}(c) $z$-scores can hardly yield very small $p$-values to clear the Bonferroni or universal threshold (corresponding to $|z|=\sqrt{2\log(10^4)}$).
}
\label{fig:shoulder}
\end{center}
\end{figure*}

With hindsight this shoulder-but-not-tail inflation pattern should perhaps be expected.
%After all, the Bonferroni correction controls family wise error rate under any dependency structure.
If one views the effect of correlation as to reduce the effective sample size, the number of extreme values of a sample with a smaller effective sample size should indeed be smaller. There are also relevant discussions on ``asymptotic independence'' in the extreme value theory \citep{sibuya1960,carvalho2012}.  However, this property of pseudo-inflation does suggest that using a Gaussian to describe correlation-induced distortion,  
%a unique property of 
%correlation-induced distortion, and implies that using a Gaussian to describe the empirical distribution of a large number of correlated null $z$-scores,
as in \locfdr{}, is not ideal (more discussion in Section \ref{sec:examples}).

\subsection{Empirical distribution of correlated $N(0,1)$ random variables} \label{sec:cor_z}

We now summarize an elegant result of \cite{schwartzman2010}, which characterizes the empirical distribution of a large number of correlated $N(0, 1)$ $z$-scores. This result plays a key role in our work.

On notation: let $\varphi$ denote the PDF of $N(0,1)$, and $\varphi^{(l)}$ denote the $l^\text{th}$ derivative of $\varphi$. We refer to the collection of functions $\left\{\frac{1}{\sqrt{l!}}\varphi^{(l)}\right\}_{l = 1}^\infty$ as the (standardized) Gaussian derivatives. (Here ``standardized" means that they are scaled to be orthonormal with respect to the weight function $\varphi$.) 

Let $Z \defeq \{Z_1, \ldots, Z_p\}$ be $p$ identically distributed, but not necessarily independent, $N(0, 1)$ random variables.  Let $F_p$ denote their empirical CDF:
\begin{equation}
F_p(\cdot)\defeq \frac1p\sum\limits_{j = 1}^p \mathcal{I}(Z_j \leq \cdot) \ ,
\end{equation}
where the indicator function $\mathcal{I}(Z_j \leq \cdot) \defeq \begin{cases}1 & Z_j \leq \cdot\\ 0 & Z_j > \cdot\end{cases}$. 
%Studying $F_n$ rather than directly studying the histogram of $\left\{Z_j\right\}_{1}^n $ avoids the nuisance brought by the choice of bin sizes.
% Since $F_n$ is defined by random variables $\left\{Z_j\right\}_{1}^n$, 
Since $Z$ are random variables, $F_p$ is a random function on $\mathbb{R} \to \left[0, 1\right]$. Also, because $Z$ are marginally $N(0,1)$, the expectation of $F_p$ is $\Phi$. 

\cite{schwartzman2010} studies the distribution of $F_p$, and how its deviation from the expectation $\Phi$
depends on the correlations among $Z$.
Specifically, assuming that each pair $\{Z_i,Z_j\}$ is bivariate normal with correlation $\rho_{ij}$ (which is weaker than the common assumption that all $Z$ are joint multivariate normal), \cite{schwartzman2010} provides the following
representation of $F_p$ when $p$ is large:
\begin{equation} \label{eq:Fbasis}
F_p(\cdot) \approx F(\cdot):=\Phi(\cdot) + \sum_{l = 1}^\infty W_l \frac{1}{\sqrt{l!}} \varphi^{(l - 1)}(\cdot) \ ,
\end{equation}
where $W_1, W_2, \ldots$ are uncorrelated random variables with $E[W_l] = 0$, and
\begin{equation}
var(W_l) = \overline{\rho^l} \defeq \frac{1}{p(p - 1)}\sum\limits_{i,j: i \neq j}\rho_{ij}^l \ .
\end{equation}
Although uncorrelated, $W_1, W_2, \ldots$ are not independent; they must have higher-order dependence to guarantee that $F$ is non-decreasing. Also here we assume $\overline{\rho^l} \geq 0$ for all $l \in \mathbb{N}$. Note that this assumption should not be too demanding for large $p$ in practice \citep[][also see Appendix \ref{sec:schwartzman2010}]{schwartzman2010}.

Since $F$ is a CDF, its derivative defines a corresponding density:
\begin{equation} \label{eq:gd_z_1}
f(\cdot) \defeq F'(\cdot) = \varphi(\cdot) + \displaystyle\sum\limits_{l = 1}^\infty W_l \frac{1}{\sqrt{l!}}\varphi^{(l)}(\cdot) \ .
\end{equation}
Intuitively, \eqref{eq:gd_z_1} characterizes how
the (limiting) empirical distribution (histogram) of $Z$ is likely to randomly deviate from the expectation $\varphi$, using standardized Gaussian derivatives as basis functions.

The representation \eqref{eq:gd_z_1} is crucial to our work here, and
provides some remarkable insights. We highlight particularly the following:
\begin{enumerate}
\item \label{rhobar} The expected deviations of $f$ from $\varphi$ are determined by the variances of the coefficients $W_l$, which are determined by the mean and higher moments of the pairwise correlations, $\overline{\rho^l}$. In the special case where $Z$ are independent all these terms are $0$, and $f=\varphi$.
\item Following from \ref{rhobar}, to create a tangible deviation from $\varphi$, $\overline{\rho^l}$ must be non-negligible (for some $l$). This requires {\it pervasive, but not necessarily strong,} pairwise correlations. For example, pervasive correlations occur if there is an underlying low-rank structure in the data, where all $Z$ are influenced by a small number of underlying random factors, and so are all correlated with one another. In this case $\overline{\rho^l}$ will be non-negligible, and $f$ may deviate from $\varphi$.
In contrast, there can exist very strong pairwise correlations with negligible effect on $f$. For example, suppose $p$ is even, and let $Z$ be in $p / 2$ pairs, with each pair having correlation one but different pairs being independent. The histogram of $Z$ will look very much like $N(0,1)$, because $\overline{\rho^l} = \frac{1}{p - 1} \approx 0$ for large $p$. In other words, not all kinds of correlations, even large ones, distort the empirical distribution of $Z$.
\item Barring special cases such as $\rho_{ij}=1$, the moments $\overline{\rho^l}$, and hence
the expected magnitude of $W_l$, will decay quickly with $l$. Consequently the sum in \eqref{eq:gd_z_1} will typically be dominated by the first few terms,
and the shape of the first few basis functions will determine the typical deviation of $f$ from $\varphi$. Of the first four basis functions (Figure \ref{fig:GD}), the \nth{2} and \nth{4} correspond to pseudo-inflation or pseudo-deflation in the shoulders of $\varphi$, depending on the signs of their coefficients, whereas the \nth{1} and \nth{3} correspond to mean shift and skewness.
This explains the empirical observation that correlation-induced pseudo-inflation tends to focus in the shoulders, and not the tails. (Also see Appendix \ref{sec:gauss_sgd} for the special case $\rho_{ij} = 1$.)
\end{enumerate}

%%%%%%%%%%%%%%%%%%%%%%%%%%%%%%%%%%%%%%%%%%%%%%%%
\begin{figure*}[!htb]
\begin{center}
\includegraphics[width = \linewidth]{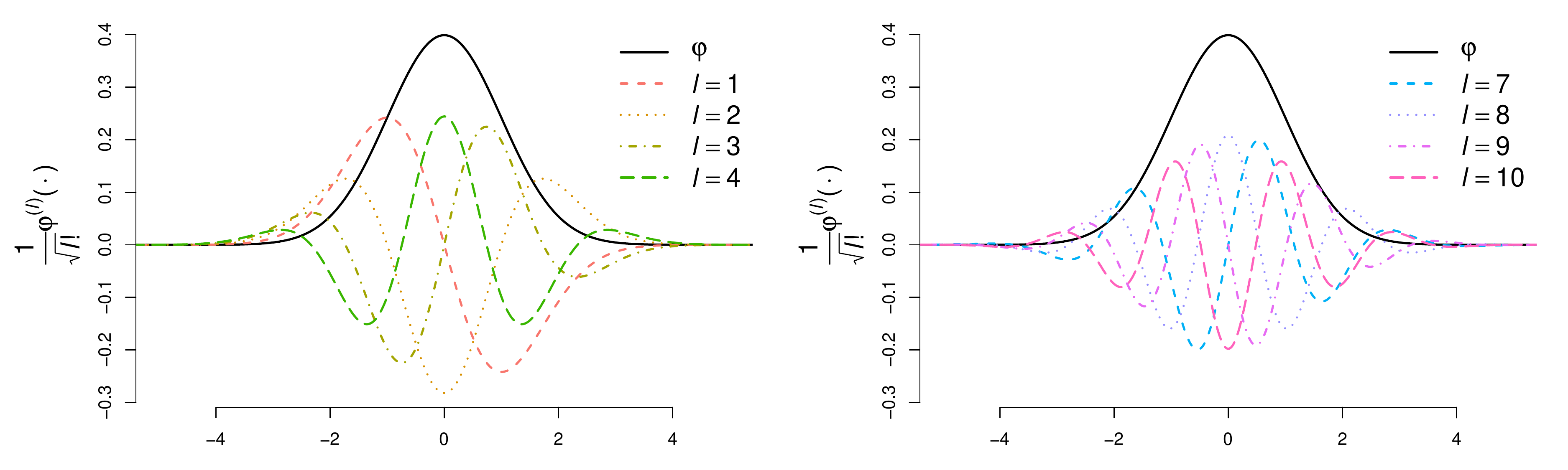}
\caption{Illustration of the standard Gaussian density, $\varphi$, and its standardized derivatives. The left panel shows $\varphi$ and its first four standardized derivatives. The \nth{2} and \nth{4} derivatives correspond to pseudo-inflation or pseudo-deflation in the shoulders; the \nth{1} and \nth{3} derivatives correspond to mean shift or skewness. The right panel shows $\varphi$ and its \nth{7}-\nth{10} derivatives. Even for these higher-order derivatives, tails are short, implying that correlation-induced distortion is unlikely to have long tails.}
\label{fig:GD}
\end{center}
\end{figure*}
%%%%%%%%%%%%%%%%%%%%%%%%%%%%%%%%%%%%%%%%%%%%%%%%

In discussing \cite{efron2010}, \cite{schwartzman2010} used this result to argue that ``a wide unimodal histogram (of $z$-scores) may be indication of the presence of true signal, rather than an artifact of correlation.'' Specifically, by discarding terms for $l \geq 4$ in \eqref{eq:gd_z_1}, he found that the largest central spread (standard deviation) for $f$ in \eqref{eq:gd_z_1} being a unimodal density is approximately 1.3. Along similar lines, we can show (Appendix \ref{sec:gauss_sgd}) that the maximum standard deviation for $f$ being a Gaussian density is $\sqrt{2} \approx 1.4$. The key point here is that the effects of correlation are different from the effects of true signals, so the two can (often) be separated. Our methods here are designed to do exactly that.

% This may shed a light on why the $BH_q$ procedure \citep{benjamini1995} to control FDR is surprisingly robust to correlation.

%-------------------------------------------------%
\section{Empirical Bayes Normal Means with Correlated Noise} \label{sec:cash}
%-------------------------------------------------%

\subsection{The Exchangeable Correlated Noise model} \label{fitting_z}

As a first step towards allowing for correlated noise in the EBNM problem, we develop
methods to fit the representation \eqref{eq:gd_z_1} to correlated null $z$-scores.
We do this by treating $Z$ as conditionally iid samples from $f$ in \eqref{eq:gd_z_1}, parameterized by $\omega \defeq \{\omega_1, \omega_2, \ldots \}$ which are realizations of $W \defeq \{W_1, W_2, \ldots\}$:
\begin{equation}
Z_j \mid \{W = \omega\} \iid f(\cdot; \omega) := \varphi(\cdot) + \sum\limits_{l = 1}^\infty
\omega_l \frac{1}{\sqrt{l!}}\varphi^{(l)}(\cdot)
%\, \qquad j = 1, 2, \ldots, p 
\ .
\label{eq:f_omega}
\end{equation}
% Given $\left\{Z_j\right\}_{1}^n$, $f$ is fitted by maximum likelihood.
It may seem perverse to model correlated random variables as conditionally iid. However, this treatment can be motivated by assuming $Z$ are exchangeable and appealing to de Finetti's representation theorem \citep{definetti1937}, which says that (infinitely) exchangeable random variables can be represented as being conditionally iid from their empirical distribution. We therefore refer to the model \eqref{eq:f_omega} as the {\it exchangeable correlated noise (ECN)} model. We also refer to $f$ as the {\it correlated noise distribution}.

To fit the ECN model \eqref{eq:f_omega} with observed $Z$, we estimate $\omega$, essentially by maximum likelihood, but
with a couple of additional complications that we now describe. 
First, since $f$ is a density, we must constrain the parameters $\omega$ to ensure that $f(\cdot; \omega)$ is non-negative (note that \eqref{eq:f_omega} integrates to one for any $\omega$, but is not guaranteed to be non-negative). Ideally $f$ should be non-negative on the whole real line, but this constraint is difficult to work with, so we approximate it using a discrete approximation: we constrain $f(\mathfrak{z}_i; \omega) \ge 0$ on a fine grid $\{\mathfrak z_1, \ldots, \mathfrak z_m\}$ such as $\{-10, -9.999, -9.998, \ldots, +9.998, +9.999, +10\}$, in addition to $f(Z_j; \omega)\ge 0$ for all $j$.

Second, to incorporate the prior expectation that the
absolute value of $\omega_l$ should decay quickly with $l$ 
(because $var(W_l) = \overline{\rho^l}$)
we introduce a penalty on $\omega$, 
\begin{equation}
h(\omega) \defeq \sum_l\gamma_l |\omega_l|,
\label{eq:penalty_h}
\end{equation}
where we take the penalty parameters $\gamma_l$ to be
\begin{equation}
\gamma_l =
\begin{cases}
0 & l \text{ is odd} \\
\gamma / \rho^{l/2} & l \text{ is even}
\end{cases} \ ,
\label{eq:gamma}
\end{equation}
where $\gamma$ represents a common penalty, and $\rho$ some notion of average pairwise correlation. For computational convenience we use only the first $L=10$ Gaussian derivatives (see Figure \ref{fig:GD} for \nth{7}-\nth{10} standardized Gaussian derivatives) and set $\omega_l=0$ for $l>10$. 
(Recall that $var(W_l) = \overline{\rho^l}$, so $W_l$'s realization $\omega_{l}$ will generally be negligible in practice for $l > 10$.)
Of course a full Bayesian treatment would attempt to account for uncertainty in $\omega$; in ignoring that here we are making the usual EB compromise.

In numerical simulations we experimented with different combinations of $\gamma \in \{1, 5, 10, 50, 100\}$ and $\rho \in \{0.10, 0.25, 0.50, 0.75, 0.90\}$, and found that $\gamma = 10, \rho = 0.5$ performed well in a variety of situations, although results were not very sensitive to the choice of $\gamma$ and $\rho$. All results in this paper were obtained with $\gamma = 10, \rho = 0.5$.

% \begin{equation}
% \begin{array}{rll}
% \max\limits_{\omega} & \sum\limits_{j = 1}^n \log\left(f\left(Z_j\right)\right) = \sum\limits_{j = 1}^n \log\left(\varphi\left(Z_j\right) + \sum\limits_{l = 1}^L \omega_l\frac{1}{\sqrt{l!}}\varphi^{(l)}\left(Z_j\right)\right) & \\
% \text{s.t.} 
% % & W_0 = 1 \\
% & f\left(z\right) = \varphi\left(z\right) + \sum\limits_{l = 1}^L \omega_l\frac{1}{\sqrt{l!}}\varphi^{(l)}\left(z\right) \geq 0, \ \ \  \forall z \in \mathbb{R} & \text{(non-negativity)}\ .
% \end{array}
% \label{prob:z_optim}
% \end{equation}

% replace the negativity constraint from $\forall z \in \mathbb{R}$ to $\forall z \in \left\{Z_j\right\}_{j = 1}^n$. Now the constraint becomes $n$ linear constraints implicitly embedded in the objective function to make the log-likelihood meaningful. Then we start by setting $L = 2$ or $4$ and increase it sequentially, until we've found an appropriate $L$. At each step, 

% With a fixed $L$ and the new discrete non-negativity constraint,
In summary, we estimate $\omega$ by solving:
\begin{equation}
\begin{array}{rll}
\max\limits_{\omega} & \sum_j \log f(Z_j; \omega) - h(\omega)  \\
\text{s.t.} 
% & W_0 = 1 \\
& f(Z_j;\omega) \geq 0, & j = 1, \ldots, p \ ,
\\
& f(\mathfrak z_i;\omega) \geq 0, & i = 1, \ldots, m \ .
\end{array}
\label{prob:z_optim_simple}
\end{equation}
This is a convex optimization and can be solved efficiently and stably
using an interior point method; we
implemented this using the \R{} package \Rmosek{} to interface
to the MOSEK commercial solver \citep{rmosek}. With $p = 10^4$, the problem is solved on average within 0.50 seconds on a personal computer (Apple iMac, 3.2 GHz, Intel Core i5).

Figure \ref{fig:cor_z} shows the fitted distributions from the ECN model, $\hat f(\cdot;\hat\omega) := \varphi(\cdot) + \sum\limits_{l = 1}^{L}
\hat\omega_l \frac{1}{\sqrt{l!}}\varphi^{(l)}(\cdot)$, on the four illustrative sets of correlated null $z$-scores.

% Meanwhile, the optimal objective value should be non-decreasing with respect to $L$, because the optimal solution $\left\{\hat W_l\right\}_{l = 1}^L$ for a lower $L$ is also a feasible solution for a higher $L$. Therefore, we'll stop at a point when the maximum log-likelihood no longer increases significantly.

% This way, a phenomenal usually happens as shown in Figure...
% The two issues are practically the same, how to find a good $L$.

% In practice data will typically contain both true signals and correlated noise, so not all $z$-scores are from the null hypotheses and have a marginal distribution of $N\left(0, 1\right)$. Instead, our empirical Bayes methodology CASH (Correlated Adaptive SHrinkage) is designed to distinguish true effects and possible correlation caused artifacts, and to estimate both of them from the data simultaneously.

\subsection{The EBNM model with correlated noise} \label{sec:cash_model}

%following the theoretical results introduced in the previous section
% Maybe we should call this the
% "pseudo-correlated EB normal means model"?
% Because we capture the effects of correlation using
% the empirical distribution idea.

To allow for correlated noise in the EBNM problem, we combine the standard EBNM model \eqref{eq:g}-\eqref{eq:ebnm} with the ECN model \eqref{eq:f_omega}:
\def\G{\mathcal{G}}
\begin{align} \label{eq:xcond}
X_j & = \theta_j + s_j Z_j \\
\label{eq:g_prior}
\theta_j & \sim g(\cdot) \\ 
\label{eq:ecsn_z}
Z_j \mid \omega & \sim f(\cdot; \omega) = \varphi(\cdot) + \sum\limits_{l = 1}^L \omega_l\frac{1}{\sqrt{l!}}\varphi^{(l)}(\cdot) \ .
\end{align}
% N\left(0, 1\right) \text{ marginally}\ ;\\
% \left\{Z_j\right\}_1^n & : & \text{correlated} \ .\\
% \label{eq:lik_cor}
%where the fempirical distribution $f$ in \eqref{eq:f_omega}. 
% In this model $\{Z_j\}_1^p$ are the (possibly correlated) homoskedastic $N(0, 1)$ noise, and so we refer to $f$ as the ``correlated noise distribution.''
Note that in this model the observations are conditionally independent given $f$ and $g$.

% , and as a result $\theta_1,\ldots,\theta_p$  are also conditionally independent given $f,g,X,s$. 

Following \cite{stephens2017} we model the prior distribution $g$ by a finite mixture of zero-mean Gaussians:
\begin{equation}
g(\cdot ; \pi) = \pi_0\delta_0(\cdot) + \sum\limits_{k = 1}^K \pi_k N(\cdot ; 0, \sigma_k^2) \ ,
\label{eq:ash_prior}
\end{equation}
where $\pi_0$ is the null proportion.
% where $\delta_0(\cdot)$ denotes a point mass on $0$, and $N(\cdot ; 0, \sigma_k^2)$ denotes the density of the normal distribution with mean $0$ and variance $\sigma_k^2$.
Here the mixture proportions $\pi \defeq \{\pi_0, \pi_1, \ldots, \pi_K\}$ are non-negative and sum to 1, and are to be estimated,
whereas the component standard deviations
% \Rightarrow
% g\left(\theta_j\right)
% =
% \pi_0\delta_0\left(\theta_j\right)
% +
% \sum\limits_{k = 1}^K\pi_k
% \frac{1}{\sigma_k}\varphi\left(\frac{\theta_j}{\sigma_k}\right)
$\sigma_1 < \sigma_2 < \cdots < \sigma_K$ are a fixed pre-specified grid of values. By using a sufficiently wide and dense grid of standard deviations this finite mixture can approximate, to arbitrary accuracy, any scale mixture of zero-mean Gaussians.

The marginal log-likelihood for $\pi,\omega$,
% $\pi \defeq \{\pi_0, \pi_1, \ldots, \pi_K\}$, $\omega \defeq \{\omega_1, \ldots, \omega_L\}$,
integrating out $\theta$, $Z$, is given by the following Theorem.
% The conditional probability of $X_j$ is
% \begin{equation}
% p\left(X_j \mid \theta_j, s_j\right)
% =
% \frac{1}{s_j}f\left(\frac{X_j - \theta_j}{s_j}\right)
% =
% \frac{1}{s_j}
% \varphi\left(\frac{X_j - \theta_j}{s_j}\right)
% +\frac{1}{s_j}
% \sum\limits_{l = 1}^L \omega_l
% \frac{1}{\sqrt{l!}}
% \varphi^{(l)}\left(
% \frac{X_j - \theta_j}{s_j}
% \right)
% \label{eq:lik_cor}
% \end{equation}
% Similarly,  To sum up, our CASH model is as follows.
% \begin{equation}
% \begin{array}{rl}
% \text{Prior:} & g\left(\theta_j\right)
% =
% \pi_0\delta_0\left(\theta_j\right)
% +
% \sum\limits_{k = 1}^K\pi_k
% \frac{1}{\sigma_k}\varphi\left(\frac{\theta_j}{\sigma_k}\right) \ ;\\
% \text{Likelihood:} &
% p\left(X_j \mid \theta_j, s_j\right)
% =
% \frac{1}{s_j}
% \varphi\left(\frac{X_j - \theta_j}{s_j}\right)
% +\frac{1}{s_j}
% \sum\limits_{l = 1}^L W_l
% \frac{1}{\sqrt{l!}}
% \varphi^{(l)}\left(
% \frac{X_j - \theta_j}{s_j}
% \right) \ ,
% \end{array}
% \label{eq:cash}
% \end{equation}
% where $\left\{\sigma_k\right\}_1^K$ are pre-determined from the data, and $\left\{\pi_k\right\}_0^n$, $\left\{W_l\right\}_1^L$ are unknown parameters to be fitted by the empirical Bayes.
\begin{theorem} \label{theorem:marginal}
Combining \eqref{eq:xcond}-\eqref{eq:ash_prior}, the marginal log-likelihood of $\pi,\omega$ is
\begin{equation}
L(\pi, \omega) \defeq
% L\left(\left\{\pi_k\right\}_0^K, \left\{\omega_l\right\}_1^L\right) =
\log\left(\prod\limits_{j = 1}^n p(X_j | \pi, \omega)\right)
=\sum\limits_{j = 1}^n
\log\left(
% \sum\limits_{k = 0}^K \pi_k p_{jk0} +
% \sum\limits_{k = 0}^K \sum\limits_{l = 1}^L \pi_k \omega_l p_{jkl}
\sum\limits_{k = 0}^K \pi_k
\left(
p_{jk0} +
\sum\limits_{l = 1}^L \omega_l p_{jkl}
\right)
\right), 
\label{eq:cash_loglik}
\end{equation}
where
% $\omega_0 \defeq 1$, and
\begin{equation}
p_{jkl} = \frac{s_j^l}{\sqrt{\sigma_k^2 + s_j^2}^{l+1}}
\frac{1}{\sqrt{l!}}
\varphi^{(l)}\left(\frac{
X_j
}{
\sqrt{\sigma_k^2 + s_j^2}
}\right).
\end{equation}
\end{theorem}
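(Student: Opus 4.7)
The plan is to reduce Theorem \ref{theorem:marginal} to a single convolution identity for Gaussian derivatives, obtained by differentiating the standard Gaussian--Gaussian convolution formula.

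First, I would note that under \eqref{eq:xcond}-\eqref{eq:ecsn_z} the observations $X_j$ are conditionally independent given $(\pi,\omega)$, since the randomness in $(\theta_j,Z_j)$ is independent across $j$ once $f(\cdot;\omega)$ and $g(\cdot;\pi)$ are fixed. This immediately gives $L(\pi,\omega)=\sum_j \log p(X_j\mid\pi,\omega)$, so it suffices to compute $p(X_j\mid\pi,\omega)$ for a single $j$.

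Next, fix $j$. Conditional on $\theta_j$, the density of $X_j$ equals $\frac{1}{s_j} f\!\left(\frac{X_j-\theta_j}{s_j};\,\omega\right)$, which by \eqref{eq:ecsn_z} is a linear combination of $\frac{1}{s_j}\varphi^{(l)}\!\left(\frac{X_j-\theta_j}{s_j}\right)$ for $l=0,\dots,L$ (with coefficients $1$ for $l=0$ and $\omega_l/\sqrt{l!}$ otherwise). I would then marginalize over $\theta_j$ using the mixture prior \eqref{eq:ash_prior}, swap the (finite) sums over $k$ and $l$ with the integral, and reduce everything to evaluating, for each $(k,l)$,
\begin{equation*}
I_{jkl} \;\defeq\; \int \frac{1}{s_j}\,\varphi^{(l)}\!\left(\frac{X_j-\theta}{s_j}\right)\, \frac{1}{\sigma_k}\varphi\!\left(\frac{\theta}{\sigma_k}\right)\, d\theta,
\end{equation*}
with the convention that the $k=0$ Gaussian degenerates to $\delta_0$, in which case $I_{j0l}=\frac{1}{s_j}\varphi^{(l)}(X_j/s_j)$, matching the formula for $p_{jkl}$ at $\sigma_0=0$.

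The key step is the identity
\begin{equation*}
I_{jkl} \;=\; \frac{s_j^{\,l}}{(\sigma_k^2+s_j^2)^{(l+1)/2}}\,\varphi^{(l)}\!\left(\frac{X_j}{\sqrt{\sigma_k^2+s_j^2}}\right).
\end{equation*}
I would derive it by starting from the elementary convolution of two Gaussian densities,
\begin{equation*}
\int \tfrac{1}{s_j}\varphi\!\left(\tfrac{X-\theta}{s_j}\right)\tfrac{1}{\sigma_k}\varphi\!\left(\tfrac{\theta}{\sigma_k}\right)d\theta \;=\; \tfrac{1}{\sqrt{\sigma_k^2+s_j^2}}\,\varphi\!\left(\tfrac{X}{\sqrt{\sigma_k^2+s_j^2}}\right),
\end{equation*}
and differentiating both sides $l$ times with respect to $X$. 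On the left, differentiation passes through the integral (Gaussian tails justify this), and each derivative of $\frac{1}{s_j}\varphi\!\left(\frac{X-\theta}{s_j}\right)$ with respect to $X$ brings down one factor of $1/s_j$ inside $\varphi^{(l)}$; on the right, each derivative brings down a factor $1/\sqrt{\sigma_k^2+s_j^2}$. Collecting powers gives exactly the displayed identity, and multiplying by $1/\sqrt{l!}$ yields $p_{jkl}$.

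The main obstacle is purely bookkeeping: tracking the powers of $s_j$ and $\sqrt{\sigma_k^2+s_j^2}$ that come from the chain rule when differentiating each side, and handling the $\sigma_0=0$ case uniformly with the $k\ge 1$ cases. Once the convolution identity is in hand, substituting it back into the marginalization gives $p(X_j\mid\pi,\omega)=\sum_k \pi_k\bigl(p_{jk0}+\sum_{l=1}^L \omega_l p_{jkl}\bigr)$, and taking logs and summing over $j$ completes the proof.
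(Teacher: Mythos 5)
Your proposal is correct and follows essentially the same route as the paper: marginalize over $\theta_j$, expand the product of the two finite sums, and reduce to the Gaussian--Gaussian-derivative convolution $p_{jkl}$. The only difference is that the paper simply asserts the analytic form of that convolution, whereas you supply a clean derivation of it by differentiating the standard Gaussian convolution identity $l$ times in $X$ (and your bookkeeping of the powers of $s_j$ and $\sqrt{\sigma_k^2+s_j^2}$, including the degenerate $\sigma_0=0$ case, checks out).
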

\begin{proof}
See Appendix \ref{sec:app}.
\end{proof}

\subsection{Fitting the model} \label{sec:cash_optim}

Following the usual EB approach, we fit the model \eqref{eq:xcond}-\eqref{eq:ash_prior} 
in two steps, first estimating $g,f$ by estimating $\pi,\omega$ and then basing inference for $\theta$ on the (estimated) posterior distribution $p(\theta | X , s, \hat\pi,\hat\omega)$. Note that under the model \eqref{eq:xcond}-\eqref{eq:ash_prior} $\theta_1,\ldots,\theta_p$  are conditionally independent given $f,g,X,s$, so this posterior distribution $p(\theta | X , s, \hat\pi,\hat\omega)$ factorizes, and is determined by its marginal distributions $p(\theta_j | X_j, s_j, \hat\pi,\hat\omega)$. The intuition here is that, under the exchangeability assumption, the effects of correlation are captured entirely by the (realized) correlated noise distribution $f$. Once this distribution is estimated, the inferences for each $\theta_j$ become independent, just as in the standard EBNM problem.

The usual EBNM approach to estimating $\pi,\omega$ would be to maximize the likelihood $L(\pi,\omega)$. Here we modify this approach using maximum penalized likelihood. Specifically we 
use the penalty on $\omega$ as in \eqref{eq:penalty_h}, and the penalty on $\pi$ 
used by \cite{stephens2017} to encourage conservative (over-)estimation of
the null proportion $\pi_0$ (to induce conservative estimation of false discovery rates). 
% obtaining $\left(\hat\pi, \hat\omega\right) = \text{arg}\max\limits_{\pi, \omega}L\left(\pi, \omega\right)$ with appropriate constraints.
% $$
% \begin{array}{rl}
% \max\limits_{\pi, \omega} & 
% \sum\limits_{j = 1}^n
% \log\left(
% \sum\limits_{k = 0}^K \pi_k 
% \left(
% p_{jk0} +
% \sum\limits_{l = 1}^L \omega_l p_{jkl}
% \right)
% \right)
% \\
% \text{s.t.} & \sum\limits_{k = 0}^K\pi_k = 1\\
% & \pi_k \ge 0, \ \ k = 0, 1, \ldots, K\\
% % & \omega_0 = 1\\
% & \varphi\left(z\right) +
% \sum\limits_{l = 1}^L \omega_l \frac{1}{\sqrt{l!}}\varphi^{(l)}(z) \geq 0, \ \ \forall z\in \mathbb{R}\\
% % & \left\{w_l\right\} \text{ decay reasonably fast} \ .
% \end{array}
% $$
Thus, we solve
\begin{equation} \label{eq:cash_optim}
\hat{\pi},\hat{\omega}  = \arg \max\limits_{\pi, \omega}  \sum\limits_{j = 1}^n
\log\left(
\sum\limits_{k = 0}^K \pi_k 
\left(
p_{jk0}  +
\sum\limits_{l = 1}^L \omega_l p_{jkl}
\right)
\right)
% + \sum\limits_{k = 0}^K\lambda_k\log(\pi_k) 
+ \sum\limits_{k = 0}^K\lambda_k\log(\pi_k)
- \sum\limits_{l = 1}^L\gamma_l|\omega_l|
 \end{equation}
subject to the constraints
 \begin{align}
\sum\limits_{k = 0}^K\pi_k & = 1\\
\pi_k  & \ge 0, \ \ k = 0, 1, \ldots, K\\
% & \omega_0 = 1\\
\varphi\left(\mathfrak z_i\right) +
\sum\limits_{l = 1}^L \omega_l \frac{1}{\sqrt{l!}}\varphi^{(l)}(\mathfrak z_i) & \geq 0, \ \ i = 1, \ldots, m \ .
\label{eq:cash_nonneg}
\end{align}
In \eqref{eq:cash_nonneg} we used the same device as in
\eqref{prob:z_optim_simple} to capture non-negativity of $f$. 
We set $\gamma_l$ as in \eqref{eq:gamma}, use only the first $L=10$ Gaussian derivatives, and set $\lambda_0 = 10$, $\lambda_1 = \cdots = \lambda_K = 0$ as in \cite{stephens2017}.
% ;\lambda_1 = \cdots = \lambda_K = 0

Problem \eqref{eq:cash_optim} is biconvex. That is, given a feasible $\hat\pi$, the optimization over $\omega$ is convex; and given a feasible $\hat\omega$, the optimization over $\pi$ is convex. The optimization over $\pi$ 
can be solved using the EM algorithm, or more
efficiently using convex optimization methods \citep{koenker2014,koenker2017,mixsqp}. To optimize over $\omega$ we use the same approach as in solving \eqref{prob:z_optim_simple}. To solve \eqref{eq:cash_optim} we simply iterate between
these two steps until convergence.

\subsection{Posterior calculations}

For each $j$, the posterior distribution
$p(\theta_j \mid X_j , \hat\pi,\hat\omega)$ is, by Bayes Theorem, given by
\begin{equation}
p(\theta_j \mid X_j , \hat\pi,\hat\omega)
% =
% \frac{\hat g\left(\theta_j\right)\hat p\left(X_j \mid \theta_j, s_j\right)}{\displaystyle\int_{\mathbb{R}}\hat g\left(\theta_j\right)\hat p\left(X_j \mid \theta_j, s_j\right)d\theta_j}
=
\frac{
% \hat g\left(\theta_j\right)\hat p\left(X_j \mid \theta_j, s_j\right)
\left[\hat\pi_0\delta_0 + \sum\limits_{k = 1}^K \hat\pi_k N(\theta_j ; 0, \sigma_k^2)\right]
\left[\frac{1}{s_j}
\varphi\left(\frac{X_j - \theta_j}{s_j}\right)
+
\sum\limits_{l = 1}^L \hat\omega_l
\frac{1}{s_j}
\frac{1}{\sqrt{l!}}
\varphi^{(l)}\left(
\frac{X_j - \theta_j}{s_j}
\right)\right]
}{
\sum\limits_{k = 0}^K \hat\pi_k
\left(
p_{jk0} +
\sum\limits_{l = 1}^L \hat\omega_l p_{jkl}
\right)
} .
\end{equation}
% The estimates $\hat g$ and $\hat f$ are obtained from the penalized maximum marginal likelihood problem \eqref{eq:cash_optim} by
% $$
% \begin{array}{c}
% \hat g\left(\theta_j\right) = \hat\pi_0\delta_0
% % \left(\cdot\right)
% + \sum\limits_{k = 1}^K \hat\pi_k N\left(\theta_j; 0, \sigma_k^2\right) \ , \\
% \hat f\left(\cdot\right) = \varphi\left(\cdot\right) +
% \sum\limits_{l = 1}^L \hat \omega_l \frac{1}{\sqrt{l!}}\varphi^{(l)}\left(\cdot\right) \ .
% \end{array}
% $$
% When $g_k = N\left(0, \sigma_k^2\right)$, this becomes
% $$
% p\left(\theta_j \mid X_j, \hat g, \hat f, s_j\right) = 
% \frac{
% \displaystyle
% \sum\limits_k\sum\limits_l \hat\pi_k \hat w_l 
% \frac{1}{\sigma_k}
% \varphi\left(\frac{\theta_j}{\sigma_k}\right)
% \frac{1}{s_j}
% \frac{1}{\sqrt{l!}}
% \varphi^{(l)}\left(\frac{X_j - \theta_j}{s_j}\right)
% }{
% \displaystyle
% \sum\limits_k\sum\limits_l \hat\pi_k \hat w_l p_{jkl}
% } \ .
% $$
Despite the somewhat complex form, some important functionals of this posterior distribution are analytically
available.
\begin{enumerate}
    \item The posterior mean for $\theta_j$
\begin{equation}
E[\theta_j \mid X_j , \hat\pi,\hat\omega]
=
\displaystyle
\frac{
\sum\limits_{k = 0}^K \hat\pi_k
\left(
m_{jk0} +
\sum\limits_{l = 1}^L \hat\omega_l m_{jkl}
\right)
% \sum\limits_k \sum\limits_l \hat\pi_k \hat w_l m_{jkl}
}{
\sum\limits_{k = 0}^K \hat\pi_k
\left(
p_{jk0} +
\sum\limits_{l = 1}^L \hat\omega_l p_{jkl}
\right)
% \sum\limits_k \sum\limits_l \hat\pi_k \hat w_l p_{jkl}
} \ ,
\end{equation}
where $m_{jkl} =
-
\frac{s_j^l \sigma_k^2}{\sqrt{\sigma_k^2 + s_j^2}^{l+2}}
\frac{1}{\sqrt{l!}}
\varphi^{(l+1)}\left(\frac{X_j}{\sqrt{\sigma_k^2 + s_j^2}}\right)
$.

\item The local FDR \citep[lfdr;][]{efron2008}  is
\begin{equation}
\text{lfdr}_j \defeq {\text{Pr}}(\theta_j = 0 \mid X_j , \hat\pi,\hat\omega)
=
\frac{
\hat\pi_0
\frac{1}{s_j}
\varphi\left(\frac{X_j}{s_j}\right)
+
\sum\limits_{l = 1}^L \hat\omega_l
\frac{1}{s_j}
\frac{1}{\sqrt{l!}}
\varphi^{(l)}\left(\frac{X_j}{s_j}\right)
}{
\sum\limits_{k = 0}^K \hat\pi_k
\left(
p_{jk0} +
\sum\limits_{l = 1}^L \hat\omega_l p_{jkl}
\right)
}.
\end{equation}
From this, the FDR of any discovery set $\Gamma\subseteq\{1,\ldots,n\}$ can be estimated as
\begin{equation}
\widehat{\text{FDR}}(\Gamma)
= 
\frac{1}{|\Gamma|}
\sum\limits_{j\in\Gamma}
\text{lfdr}_j \ ,
\end{equation}
% {\text{Pr}}\left(\theta_j = 0\mid X_j, \hat g, \hat f, s_j\right) 
where $|\Gamma|$ denotes the number of elements in $\Gamma$. Storey's $q$-value \citep{storey2003} for each $j$ is defined as
\begin{equation}
q_j \defeq \widehat{\text{FDR}}(\{k: \text{lfdr}_k \le \text{lfdr}_j\}) \ .
\end{equation}

\item \cite{stephens2017} introduced the term ``local false sign rate (lfsr)" to refer to the probability of getting the sign of an effect wrong, as well as the false sign rate (FSR) and the $s$-value, analogous to the FDR and the $q$-value, respectively. 
Making statistical inference about the sign of a parameter, rather than solely focusing on whether the parameter being zero or not, was also discussed in \cite{tukey1991,gelman2012}.
The value of $\text{lfsr}_j$ is defined as
\begin{equation}
\text{lfsr}_j \defeq \min\{
\text{Pr}(\theta_j \geq 0 \mid X_j, \hat \pi, \hat \omega), \ 
\text{Pr}(\theta_j \leq 0 \mid X_j, \hat \pi, \hat \omega)
\} \ ,
\end{equation}
which is easily calculated from $\text{lfdr}_j$ and
\begin{equation}
\text{Pr}(\theta_j > 0 \mid X_j, \hat \pi, \hat \omega)
=
\frac{
\sum\limits_{k = 1}^K \hat\pi_k
\left(
\hat\tau_{jk0} +
\sum\limits_{l = 1}^L \hat\omega_l \tau_{jkl}
\right)
}
{
\sum\limits_{k = 0}^K \hat\pi_k
\left(
p_{jk0} +
\sum\limits_{l = 1}^L \hat\omega_l p_{jkl}
\right)} \ ,
\end{equation}
where $\tau_{jkl} = \frac{s_j^l
}
{\sqrt{l!}\sqrt{s_j^2 + \sigma_k^2}^{l + 1}}
\left(
\sum\limits_{m = 0}^{l}
\binom{l}{m}
\left(\frac{\sigma_k}{s_j}\right)^m
\varphi^{(m - 1)}
\left(\frac{X_j}{\sqrt{s_j^2 + \sigma_k^2}}\frac{\sigma_k}{s_j}\right)
\varphi^{(l - m)}
\left(\frac{X_j}{\sqrt{s_j^2 + \sigma_k^2}}\right)
\right)$. The FSR and $s$-value are estimated and defined similarly to the FDR and $q$-value as
\begin{equation}
\widehat{\text{FSR}}(\Gamma)
= 
\frac{1}{|\Gamma|}
\sum\limits_{j\in\Gamma}
\text{lfsr}_j
% {\text{Pr}}\left(\theta_j = 0\mid X_j, \hat g, \hat f, s_j\right)
\ , \qquad
s_j \defeq \widehat{\text{FSR}}(\{k: \text{lfsr}_k \le \text{lfsr}_j\}) \ .
\end{equation}

\end{enumerate}

% Conversely, at any FDR threshold $q$, a discovery set
% $$
% \hat\Gamma\left(q\right) = \left\{
% j: q_j \le q
% \right\}
% $$
% is shown to control the FDR at $q$. \citep{storey2004}

%\subsubsection{The de-correlation of the joint posterior distribution}

%Assuming exchangeability of correlated noise, the posterior joint distributions $p(\theta_1, \ldots, \theta_p | X, s, \hat\pi, \hat\omega)$ are no longer correlated.

\subsection{Software}

We implemented both the fitting procedure and posterior calculations in an \R{} package \cash{} which is available at \url{https://github.com/LSun/cashr}. For $p = 10^4$, it takes on average about 6 seconds for model fitting and posterior calculations on a personal computer (Apple iMac, 3.2 GHz, Intel Core i5).

\section{Numerical Results} \label{sec:examples}

We now empirically assess the performance of \cash{} on both simulated and real data. We focus our assessments on the ``multiple testing" setting where $\theta$ is sparse and
the main goal is to identify ``significant" non-zero elements $\theta_j$. This problem
can be tackled using EB methods \citep{thomas1985,greenland1991} and here
we compare \cash{} with both \locfdr{} \citep{locfdr}, which 
attempts to capture effects of correlation through an empirical null strategy discussed in Section \ref{sec:nongaussian}, and \ash{} \citep{stephens2017}, which fits the same EBNM model as \cash{} but without allowing for correlation -- i.e. \ash{} is equivalent to setting $f=\varphi$ in \eqref{eq:ecsn_z}. Multiple testing can also be
tackled by attempting to control the FDR in the frequentist sense, and so we also compare with the Benjamini-Hochberg procedure \citep[BH;][]{benjamini1995} and \qvalue{} \citep{storey2002,storey2003}.
One advantage of the EBNM approach to multiple testing is that it can provide not only FDR assessments, but also point estimates and interval estimates for the effects $\theta_j$ \citep{stephens2017}. However, to keep our comparisons simple we focus here only on FDR assessments.

%  BH, , \locfdr{}, and \ash{}.

\subsection{Realistic simulation with gene expression data} \label{sec:sim}

We constructed synthetic data with realistic correlation structure using the simulation framework in Section \ref{sec:distortion}. The data are simulated according to the EBNM with correlated noise model \eqref{eq:g}-\eqref{eq:ebnm} as follows.
\begin{itemize}
\item The $p = 10^4$ normal means $\theta_1,\dots,\theta_p$ are iid samples from
    \begin{equation}
   g(\cdot) =\pi_0\delta_0(\cdot) + (1 - \pi_0)g_1(\cdot) \ ,
    \end{equation}
for six choices of $g_1$ and three choices of $\pi_0 \in \{0.5, 0.9, 0.99\}$ (Figure \ref{fig:fdr_g}). The density functions of these six choices of $g_1$ and other simulation details are in Appendix \ref{sec:sim_detail}.
\item To make the correlation structure among noise realistic, in each simulation $Z$ are simulated from real gene expression data as in Section \ref{sec:distortion}.
\item The standard deviations $s$ are also simulated from real gene expression data using the same pipeline, and are scaled to have $\frac1p\sum s_j^2 = 1$.
\item The observations are constructed as $X_j = \theta_j + s_j Z_j$, $j = 1, \ldots, p$.
\end{itemize}
In each simulated data set, this framework generates $p$ correlated observations $X_j$ of respective normal means $\theta_j$ with corresponding standard deviations $s_j$. The data $\{(X_1, s_1), \ldots, (X_p, s_p)\}$ are made available to each method, while the effects $\theta_j$ are withheld. The analysis goal is to identify which $\theta_j$ are significantly different from 0.  We applied each method to formulate a discovery set at nominal FDR $=0.1$, and calculated the empirical false discovery proportion (FDP) for each discovery set. We ran 1000 simulations for each $g_1$, divided evenly among the three choices of $\pi_0$. 

%simulate from the correlated EBNM model \eqref{eq:xcond}-\eqref{eq:g_prior}
%$$
%X_j = \theta_j + s_j Z_j \ , %\qquad j = 1, 2, \ldots, p \ ,
%$$
% The effects $\theta = \{\theta_1, \theta_2, \ldots, \theta_p\}$ are sampled independently from a spike-and-slab mixture distribution $g = \pi_0\delta_0 + (1 - \pi_0)g_1$, where $\delta_0$ is a point mass at 0.
% with true effect distribution
% \begin{equation}
%   \theta_j \overset{\text{iid}}{\sim} g=\pi_0\delta_0 + (1 - \pi_0)g_1 \ ,
%   \quad j = 1, \ldots, p = 10^4
% \end{equation}
% for six choices of $g_1$ and three choices of $\pi_0 \in \{0.5, 0.9, 0.99\}$ (Figure \ref{fig:fdr_g}). To make the simulated correlation structure realistic, we generate both the correlated noise $\{Z_j\}_1^p$ and the standard deviations $\{s_j\}_1^p$ from real gene expression data as described in Section \ref{sec:distortion}. $\{s_j\}_1^p$ are scaled to have $\frac1p\sum s_j^2 = 1$.

%%%%%%%%%%%%%%%%%%%%%%%%%%%%%%%%%%%%%%%%%%%%%%%%
\begin{figure*}[!htb]
\begin{center}
\includegraphics[width = \linewidth]{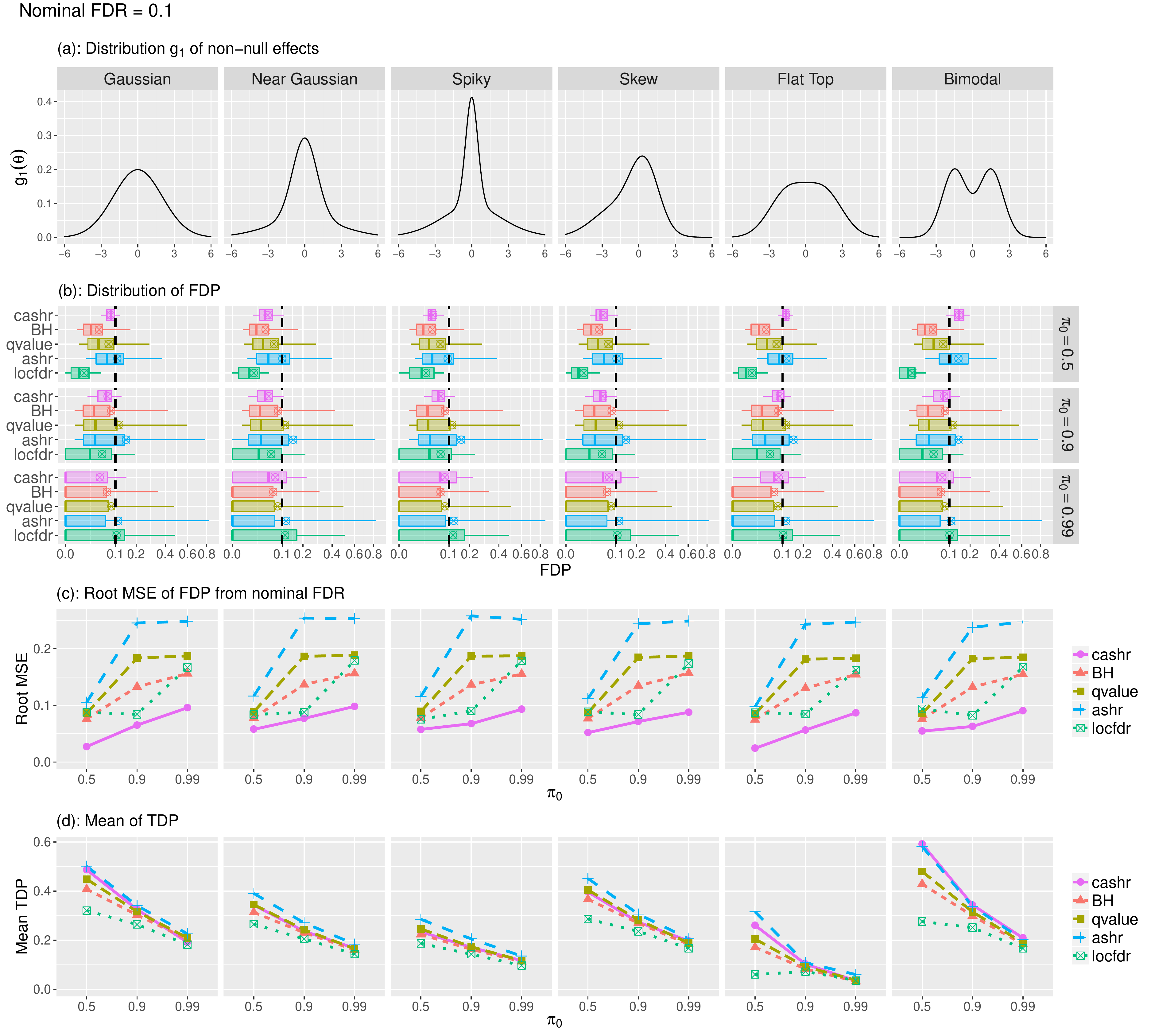}
\caption{Illustration that \cash{} outperforms other methods in producing discovery sets
whose FDP are consistently close to the nominal FDR, while maintaining good statistical power.
Simulation results are shown for six different distributions for the non-null effect ($g_1$; panel (a)) and three different values of the null proportion ($\pi_0 \in \{0.5,0.9,0.99\}$), stratified by methods. Panel (b): Comparison of the distribution of FDP, summarized as boxplots on square-root scale. The boxplots show the mean (cross), median (line), inter-quartile ranges (box), and 5th and 95th percentiles (whiskers). Panel (c): Comparison of the root MSE of FDP from the nominal FDR of 0.1, defined as $\sqrt{\text{mean}[(\text{FDP}-0.1)^2]}$. In all scenarios the distribution of FDP for \cash{} is more concentrated near the nominal $0.1$ level than other methods. Especially, the root MSE of FDP for \cash{} is uniformly lower than other methods. Panel (d): Comparison of the mean of TDP, as an indication of statistical power. On average, \cash{} maintains good power, only worse than \ash{} in some scenarios, which sometimes finds more true signals at the cost of severely losing control of FDP.
}
\label{fig:fdr_g}
\end{center}
\end{figure*}

Figure \ref{fig:fdr_g} compares the performance of each method in these simulations. Our first result is that, despite the presence of correlation, most of the methods control FDR in the usual frequentist sense under most scenarios: that is, the mean FDP is usually below the nominal level of 0.1. Indeed, BH is notable in never showing a mean FDP exceeding the nominal level, even though, as far as we are aware, no known theory guarantees this under the realistic patterns of
correlation used here (\cite{benjamini2001} gives relevant theoretical results under more restrictive assumptions on the correlation). The method most prone to lose control is \ash{}, but even its mean FDP is never above 0.2. 

% However, in the presence of general and strong correlation which is not uncommon in real-world data sets, the number of false discoveries or false discovery proportions of methods assuming independence or weak dependence may have unstably high variability 
However, despite this frequentist control of FDR, for most methods the FDP for individual data sets can often lie far from the nominal level \citep[see also][for example]{owen2005,qiu2005,blanchard2009,friguet2009}. Arguably, then, frequentist control of FDR is insufficient in practice, since we desire -- as far as is possible -- to make sound statistical inference for each data set.
That is, we might consider
%In contrast, \cash{} reliably estimates and controls FDR in all settings, whether pseudo-deflated, pseudo-inflated, or in-between.
% We consider 
a method to perform well if its FDP is consistently close to the nominal level, rather than close on average. By this criterion, \cash{} consistently outperforms other methods (Figure \ref{fig:fdr_g}): it provides uniformly lower root MSE of FDP from the nominal FDR, $0.1$, and the whiskers in the boxplots (indicating 5th and 95th percentiles) are narrower. Along with FDP, Figure \ref{fig:fdr_g} also shows the empirical true discovery proportion (TDP), defined as the proportion of true discoveries out of the number of all non-zero $\theta_j$, as an indication of statistical power. \cash{} maintains good power in that it produces higher TDP than most methods in most scenarios.
In some scenarios, \ash{} sometimes finds more true signals than \cash{}, but at the cost of severely losing control of FDP.

We note that \cash{} performs well even in settings that do not fully satisfy its underlying assumptions (e.g. where $g_1$ is asymmetric or multimodal). Note also that for our choices of $g_1$, $\pi_0 = 0.99$ is a highly sparse setting, as a large portion of the non-zero $\theta_j$ are close to zero. For example, when $g_1$ is Gaussian, only about $3$ out of $10^4$ $|\theta_j|$ are expected to be larger than $\sqrt{2\log p} \approx 4.3$. Therefore, it is understandable that no methods perform particularly well in this difficult setting. But even for this $\pi_0 = 0.99$ setting, although first impressions from the plot may be that \cash{} and BH perform similarly, closer visual inspection
shows $\cash{}$ to be better, in that its median FDP tends to be closer to $0.1$.

%%%%%%%%%%%%%%%%%%%%%%%%%%%%%%%%%%%%%%%%%%%%%%%%
\begin{figure*}[!htb]
\begin{center}
\includegraphics[width = 0.90\linewidth]{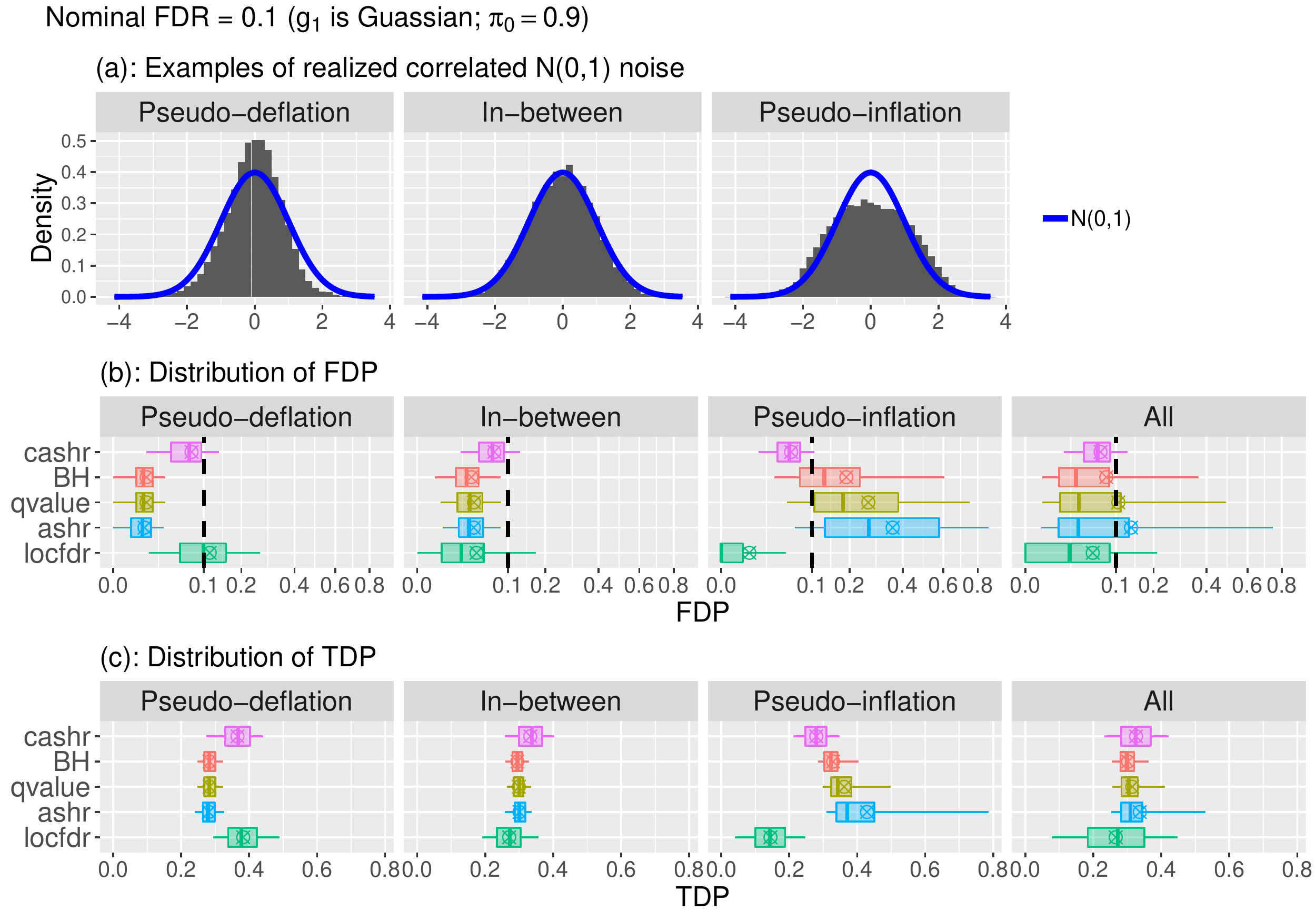}
\caption{Illustration that 
\cash{} consistently produces reliable FDP under different types of correlation-induced distortion. Here we take the results from a single simulation scenario ($g_1$ is Gaussian, $\pi_0=0.9$)
and stratify them into three groups of equal size according to the sample standard deviations of the realized correlated $N(0, 1)$ noise. Methods that ignore correlations among observations (BH, \qvalue{}, \ash{}) are generally too conservative under pseudo-deflation and too anti-conservative under pseudo-inflation; \locfdr{} tends to be too conservative under pseudo-inflation and consequently lose power; \cash{} maintains good FDR control in all settings. The boxplots show the mean (cross), median (line), inter-quartile ranges (box), and 5th and 95th percentiles (whiskers). FDP are plotted on square-root scale. Other choices of $g_1$ and $\pi_0$ give qualitatively similar results (not shown here).}
\label{fig:fdp_noise_sep}
\end{center}
\end{figure*}
%%%%%%%%%%%%%%%%%%%%%%%%%%%%%%%%%%%%%%%%%%%%%%%% 

The reason that \cash{} produces more consistently reliable  FDP is that, by design, it adapts itself to the particular correlation-induced distortion present in each data set. As illustrated in Figure \ref{fig:cor_z}, correlation can lead to pseudo-inflation in some data sets and pseudo-deflation in others. \cash{} is able to recognize which pattern is present, and correspondingly modify its behavior -- becoming more conservative in the former case and less conservative in the latter. This is illustrated in Figure \ref{fig:fdp_noise_sep}, which stratifies the realized data sets according to sample standard deviation of the realized correlated $N(0,1)$ noise $Z$ in each data set (for the setting where $g_1$ is Gaussian, $\pi_0=0.9$). The bottom $1/3$ are categorized as pseudo-deflation, top $1/3$ pseudo-inflation, and the others ``in-between.''

For data sets where $Z$ show no strong distortion (``in-between'') all methods give similar and reasonable results, with \cash{} showing only a small improvement.
% That's because many multiple testing methods take advantage of the assumption that the empirical distribution of null $z$-scores ($p$-values) is close to its theoretical marginal distribution of $N(0, 1)$ (Unif$[0, 1]$). 
% %It also corresponds to the 
However, when $Z$ are pseudo-inflated, methods ignoring correlation, such as BH, \qvalue{}, \ash{}, tend to be anti-conservative; that is, they form discovery sets whose FDP are often much larger than the nominal FDR. In contrast, \cash{} produces conservative FDP near the nominal value; and \locfdr{} is too conservative, consequently losing substantial power (discussed further in Section \ref{sec:leuk_mouse}). Conversely, with pseudo-deflation, methods ignoring correlation are too conservative, 
 producing FDP much smaller than the nominal FDR, losing power compared with \cash{} and \locfdr{}.

\subsection{Real data illustrations} \label{sec:leuk_mouse}

We now use two real data examples to illustrate some of the features of \cash{} (and other methods)
that we observed in simulated data. The first example is a well-studied data set from a leukemia study \citep{golub1999}, comparing gene expression in $47$ acute myeloid leukemia vs $25$ acute lymphoblastic leukemia samples, which was discussed extensively in \cite{efron2010} as a prime example of how correlation can distort empirical distributions. The second example comes from a study on embryonic mouse hearts \citep{smemo_thesis2012}, comparing gene expression in 2 left ventricle samples vs 2 right ventricle samples. 
(The number of samples is small, but each sample is a pool of ventricles from 150 mice -- necessary to obtain sufficient tissue for the experiments to work well -- and so this experiment involved dissection of 300 mouse hearts.)

For each data set we let $\theta_j$ denote the true $\log_2$-fold change in gene expression between the two groups for gene $j$. We use a standard analysis protocol (based on \cite{smyth2004}; see Appendix \ref{sec:sim_detail} for details) to obtain an estimate $X_j$ for $\theta_j$, and a corresponding $p$-value $p_j$. As in Section \ref{sec:distortion}, we convert the $p$-value to the corresponding $z$-score $z_j$ and use this to compute an effective standard deviation $s_j$.

Figure \ref{fig:leuk_mouse_hist} shows the empirical distribution of the $z$-scores for each data set, together
with the fitted correlated noise distribution from \cash{} and the fitted empirical null from \locfdr{}.
In both cases the histogram is substantially more dispersed than $N(0,1)$. However the two data sets
have otherwise quite different patterns of inflation: the leukemia data show inflation in both
the shoulders and tails of the distribution, whereas the mouse data show inflation only in the shoulders. This indicates the presence of some strong signals in the leukemia data, whereas the inflation in the mouse data  may
be primarily pseudo-inflation caused by correlation. Consistent with this, both \locfdr{} and \cash{} identify hundreds of significant signals in the leukemia data (at nominal FDR $=0.1$), but no significant signals in the mouse data (Table \ref{table:leuk_mouse}).

%%%%%%%%%%%%%%%%%%%%%%%%%%%%%%%%%%%%%%%%%%%%%%%%
\begin{figure*}[!htb]
\begin{center}
\includegraphics[width = 0.49\linewidth]{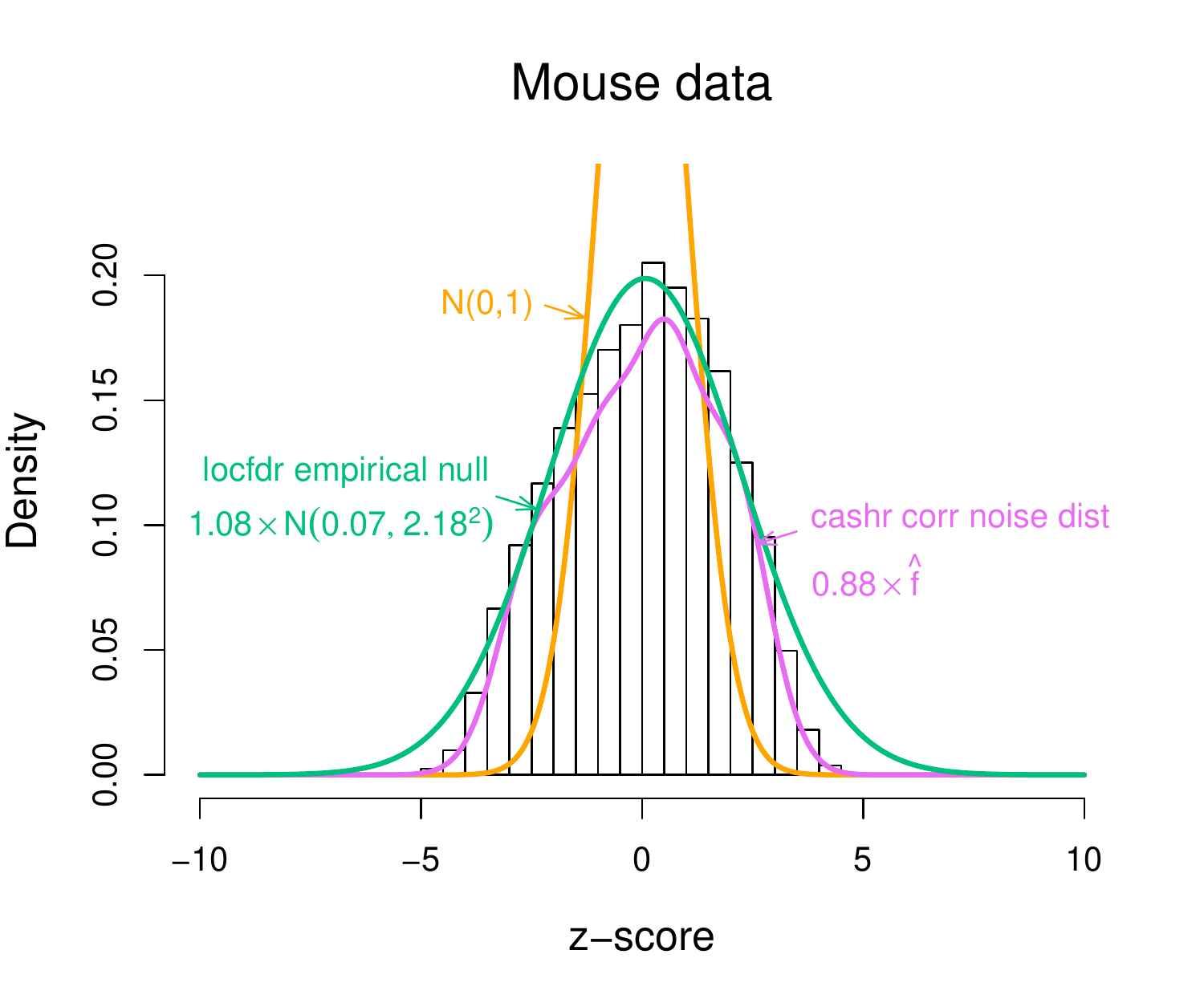}
\includegraphics[width = 0.49\linewidth]{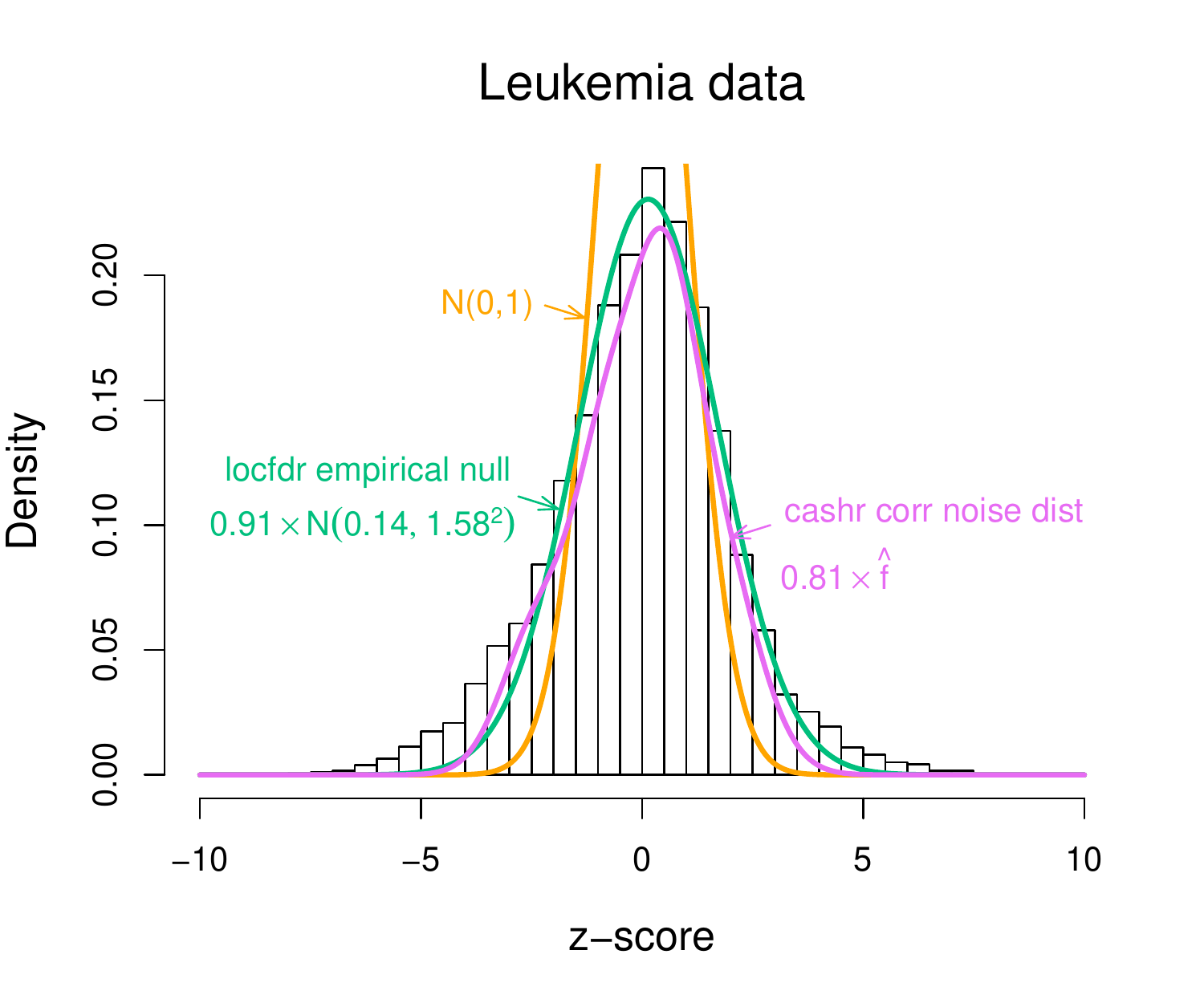}
\caption{Distribution of $z$-scores from analyzing gene differential expression in two real data sets. In both data sets, for each gene $j$, a $z$-score $z_j$ is computed, and $z_j \sim N(0, 1)$ under the null hypothesis of no differential expression. Then we compare the histogram of $z$-scores with $N(0,1)$, the fitted correlated noise distribution from \cash{}, and the fitted empirical null from \locfdr{}, scaled by respective estimated null proportions. Both histograms are substantially more dispersed than $N(0,1)$. The mouse data show inflation primarily in the shoulders of the distribution, and the fitted correlated noise distribution from \cash{} appears to be a much better fit than the fitted empirical null from \locfdr{}, particularly in the tails. The leukemia data show inflation in both shoulders and tails of the distribution, indicating the presence of some strong signals. Although otherwise similar, the fitted correlated noise distribution from \cash{} has a noticeably shorter right tail than the fitted empirical null from \locfdr{}, improving power.}
\label{fig:leuk_mouse_hist}
\end{center}
\end{figure*}
%%%%%%%%%%%%%%%%%%%%%%%%%%%%%%%%%%%%%%%%%%%%%%%% 

%%%%%%%%%%%%%%%%%%%%%%%%%%%%%%%%%%%%%%%%%%%%%%
\begin{table}[!htb]
\centering
\begin{tabular}{| c | c | c |}
\hline
& \multicolumn{2}{c|}{Number of discoveries} \\
\hhline{|~|-|-|}
\hspace {0.0cm}Method\hspace {0.0cm} & \hspace {0.0cm} Leukemia data \hspace {0.0cm} & \hspace {0.25cm} Mouse data \hspace {0.25cm} \\
\hline
\cash{} & 385 & 0 \\
\locfdr{} & 282 & 0 \\
\hdashline
BH & 1579 & 4130\\
\qvalue{} & 1972 & 6502 \\
\ash{} & 3346 & 17191 \\
\hline
\end{tabular}
\caption{Numbers of discoveries from different methods at nominal FDR $= 0.1$. We analyzed $7128$ genes in the leukemia data and $17191$ genes in the mouse data. In both data sets, the $z$-score distributions appear to have correlation-induced inflation, and the numbers of significant discoveries declared by methods accounting for correlation (\cash{} and \locfdr{}) are much smaller than those ignoring correlation (BH, \qvalue{}, \ash{}). For the leukemia data, \cash{} finds $37\%$ more significant genes than \locfdr{}.}
\label{table:leuk_mouse}
\end{table}
%%%%%%%%%%%%%%%%%%%%%%%%%%%%%%%%%%%%%%%%%%%%%%

Although the conclusions from \cash{} and \locfdr{} are, here, qualitatively similar, there are some
notable differences in their results. First, in the mouse data, the \cash{} correlated noise distribution
gives, visually, a much better fit than the \locfdr{} empirical null, particularly in the tails (Figure \ref{fig:leuk_mouse_hist}).
This is because the \cash{} correlated noise distribution is ideally suited to capture this
``shoulder-but-not-tail'' inflation pattern that is symptomatic of correlation-induced inflation.
The Gaussian empirical null distribution assumed by \locfdr{} is simply inconsistent with these data.
Indeed, this inconsistency is reflected in the null proportion estimated by \locfdr{} (1.08) which 
exceeds the theoretical upper bound
of 1.

Second, in the leukemia data, \cash{} identifies
37\% more significant results than \locfdr{} (385 vs 282). This is consistent with the greater power of \cash{} vs \locfdr{} in our simulations. 
One reason that \locfdr{} can lose power is that its Gaussian empirical null distribution tends to overestimate inflation in the tails when it tries to fit inflation in the shoulders. We see this feature in the mouse data, and although less obvious, this
appears to also be the case for the leukemia data: 
the estimated standard deviation of the empirical null is 1.58, which is almost certainly too large: a pseudo-inflated Gaussian correlated noise distribution is unlikely to have standard deviation exceeding $1.4$ (Appendix \ref{sec:gauss_sgd}).
In comparison the fitted correlated noise distribution from \cash{} has a noticeably shorter right tail (e.g.~$z \in [4,5]$) which leads it to categorize more $z$-scores in the right tail as significant (Figure \ref{fig:leuk_mouse_hist}).
On a side note, \cash{} also experiences the benefits of \ash{} highlighted in \cite{stephens2017},
which can also help increase power. 
For example, the unimodal assumption on the effects -- which allows that some of the $z$-scores around zero may correspond to true, albeit non-significant, signals -- can help improve estimates of $\pi_0$, and hence improve power.

Another feature of \cash{}, which distinguishes it from \locfdr{}, is that, by estimating $g$ while accounting for correlation-induced distortion, it can provide an estimate on the effect size distribution, $g_1$. For the mouse data, \cash{} estimates $\hat\pi_0 = 0.88$, or 12\% of genes may be differentially expressed to some extent, although it is not able to pin down any clear example of a differentially expressed gene: no gene has an estimated local FDR less than 0.80. One possible explanation for the lack of significant results in this case is lack of power. However, the estimated $g_1$ from \cash{} suggests that there may simply not exist any large effects to be discovered: $99\%$ of the probability mass of the estimated $g_1$ is on effect size $\leq 0.26$, or a mere $1.2$-fold change in gene expression. Thus the signals here, if any, are too weak to be discerned from noise and pseudo-inflation.

 We also applied the other methods -- BH, \qvalue{}, and \ash{} -- to both data sets. All three methods find very large numbers of significant results in both data sets (Table \ref{table:leuk_mouse}). Although we do not know the truth in these real data, there is a serious concern that many of these results could be false positives, since these methods
 are all prone to erroneously viewing pseudo-inflation as true signal (Figure \ref{fig:fdp_noise_sep}), and Figure \ref{fig:leuk_mouse_hist} suggests that pseudo-inflation may be present in both data sets.
 
%  On a side note, Figure \ref{fig:leukemia} also shows \qvalue{} and \ash{} give almost uniformly lower local FDR estimates than \locfdr{} and \cash{}.
%  Meanwhile, \ash{} is slightly more anti-conservative than \qvalue{} because of its unimodal assumption. The behavior of these methods on
% this classic data set is consistent with the simulation results for data exhibiting pseudo-inflation (Figure \ref{fig:fdp_noise_sep}).

%-------------------------------------------------%
\section{Discussion} \label{sec:disc}
%-------------------------------------------------%	 

We have presented a general approach to accounting for correlations among observations in the widely-used Empirical Bayes Normal Means model. Our strategy exploits theoretical
results from \cite{schwartzman2010} to model the impact
of correlation on the empirical distribution of correlated
$N(0,1)$ variables, and convex optimization techniques
to produce an efficient implementation.  We demonstrated through empirical examples that this strategy can both improve estimation
of the underlying distribution of true effects (Figure \ref{fig:deconv}) and -- in the multiple testing setting -- improve estimation of FDR compared
with EB methods that ignore correlation (Figures \ref{fig:fdr_g}, \ref{fig:fdp_noise_sep}). To the best of our knowledge, \cash{} is the first EBNM methodology to deal with correlated noise in this way.

Our empirical results demonstrate some benefits
of the EB approach to multiple testing compared with
traditional methods. In particular, \cash{} provides, on average, more accurate estimates of the
FDP than either BH or \qvalue{}. However, although we find these empirical results
encouraging, we do not have theoretical guarantees
of (frequentist) FDR control. That said, theoretical guarantees
of FDR control under arbitrary correlation structure
are lacking even for the widely-studied BH method.
BH has been shown to control FDR under certain correlation stuctures \citep[e.g. ``positive regression dependence on subsets'';][]{benjamini2001}. The Benjamini-Yekutieli procedure \citep{benjamini2001} is proved to control FDR under arbitrary dependence, but at the cost of being excessively conservative, and is consequently rarely used in practice.

A key feature of \cash{} is that it requires
no information about the actual correlations among observations. This has the important
advantage that it can be applied
wherever EBNM methods that ignore correlation can be applied. On the other hand, when additional
information on correlations is available it clearly may
be helpful to incorporate it into analyses. Within our approach such
information could be used to estimate the moments of
the pairwise correlations, and thus inform estimates of $\omega$ in the correlated noise distribution $f(\cdot; \omega)$. 
Alternatively, one could take a more ambitious approach: explicitly model the whole $p \times p$ correlation matrix, and
use this to help inform inference  \citep[e.g.][]{benjamini2007,wu2008,sun2009,friguet2009,fan2012}. Modeling correlation is likely to provide more efficient inferences when it can be accurately achieved \citep{hall2010}. However, in many situations -- particularly involving small sample sizes -- reliably modeling correlation may be impossible. 
Under what circumstances this more ambitious approach produces better inferences could be one area for future investigation.

The main assumptions underlying \cash{} are that the
correlated noise is marginally $N(0, 1)$, and that
the standard deviations are reliably computed. In the multiple testing setting this corresponds to assuming that the
test statistics are (marginally) well calibrated. If these
conditions do not hold -- for example, due to failure
of asymptotic theory underlying test statistic computations, or due to confounding factors (such as batch effects in gene expression studies), then \cash{} could give unreliable results. Of course \cash{} is not unique in this regard -- methods like BH and \qvalue{} similarly assume that test statistics are well calibrated. Dealing with confounders in gene expression studies is an 
active area of research, and several approaches exist,
many of them based on factor analysis \citep[e.g.][]{leek2007, sun2012,ruv2,wang2017,gerard.ruv,mouthwash}. 
Again, the possibility of combining these ideas
with our methods could be a future research direction.

% An important line of research tackles correlation in multiple testing by resampling or permutation. However, in our setting, naive permutation does not work. 

%An important application of EB in recent years is multiple testing and false discovery rate (FDR) control, which is also the focus of this paper. 
%\cite{storey2003} provides an elegant Bayesian interpretation of FDR.. 
%Compared with frequentist methods such as the classic BH procedure \citep[BH,][]{benjamini1995}, EB, by providing a full posterior distribution for each parameter $\theta_j$, can provide posterior mean shrinkage and credible sets as more accurate parameter estimates, and estimate the false discovery proportion for any given discovery set.

%[Exceptions we should cite ??]

%Since \cite{storey2003} provides an elegant Bayesian interpretation of the false discovery rate (FDR) control, originally proposed by \cite{benjamini1995}, a

\section*{Appendix}

\appendix

\counterwithin{table}{section}
\counterwithin{equation}{section}
\counterwithin{figure}{section}
%-------------------------------------------------%
\section{The marginal distribution of the simulated null $z$-scores} \label{sec:marginal_N01}
%-------------------------------------------------%	 

%Section \ref{sec:distortion} introduces a simulation framework to generate null $z$-scores from the GTEx RNA-seq gene expression data. We claim that the null $z$-scores from the pipeline are marginally $N(0,1)$ distributed.
Figures \ref{fig:rand_gene} and \ref{fig:avg_cdf} offer support for the claim that the $z$-scores simulated in Section \ref{sec:distortion} are marginally $N(0,1)$-distributed.

Figure \ref{fig:rand_gene} compares $z$-scores simulated as in Section \ref{sec:distortion} with $z$-scores simulated under a modified framework that removes gene-gene correlations, and with iid $N(0,1)$ samples.
The modified framework uses exactly the same simulation and analysis pipeline as the original framework of Section \ref{sec:distortion}, with one important difference: in each simulation, {\it for each gene independently} we randomly selected two groups of five samples without replacement, hence removing gene-gene correlations. 

The empirical CDF of $10^4$ data sets simulated as in Section \ref{sec:distortion} show a huge amount of variability (panel (a)), presumably due to correlations among genes.  In the modified framework, correlation-induced distortion disappears: the empirical CDF of all $10^4$ data sets are almost exactly the same as $N(0,1)$ (panel (b)), just as with the iid $N(0,1)$ samples (panel (c)). This demonstrates that without gene-gene correlations, the analysis pipeline used here produces uncorrelated $N(0,1)$ $z$-scores.

%%%%%%%%%%%%%%%%%%%%%%%%%%%%%%%%%%%%%%%%%%%%%%%%
\begin{figure*}[!htb]
\begin{center}
\includegraphics[width = 0.3\linewidth]{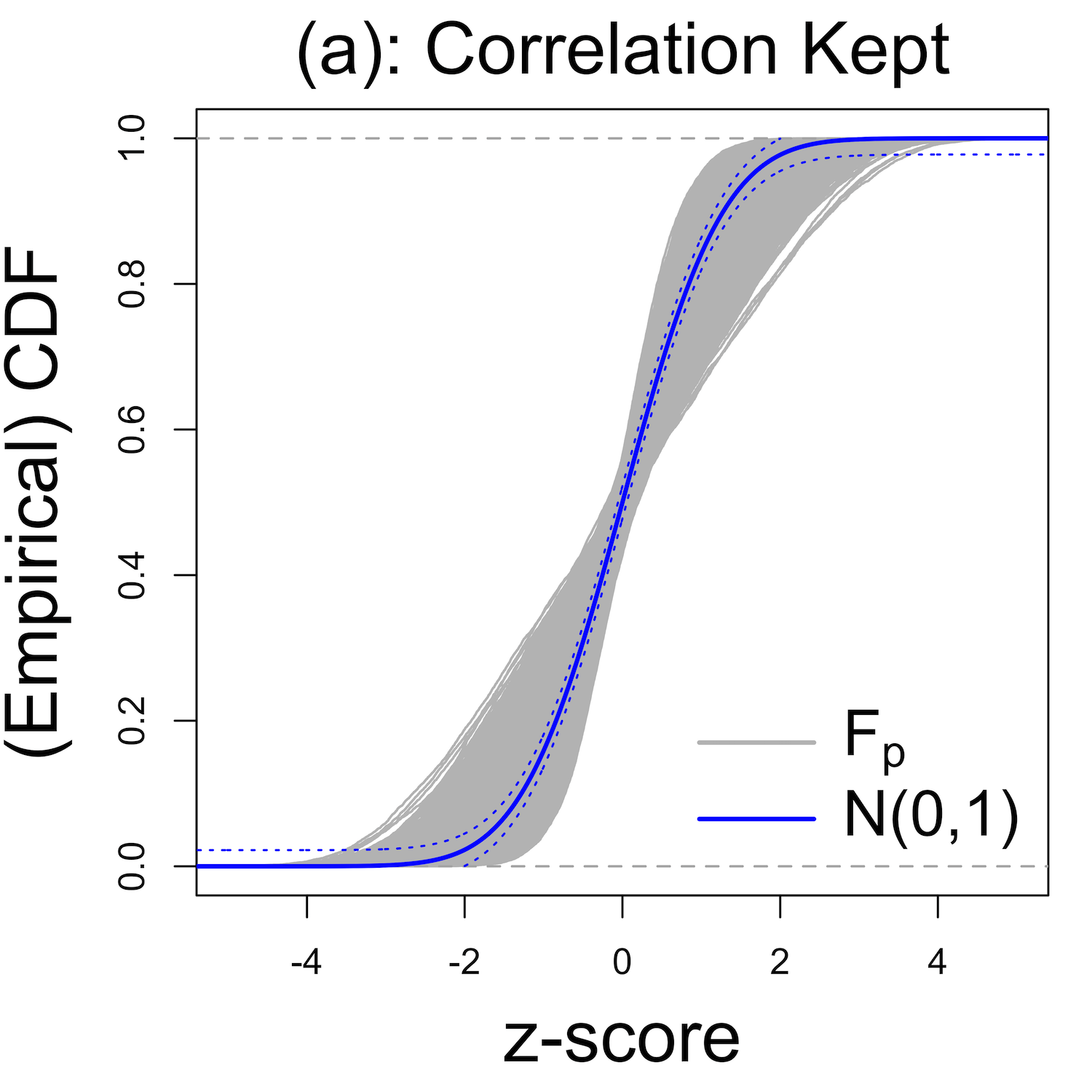}
\includegraphics[width = 0.3\linewidth]{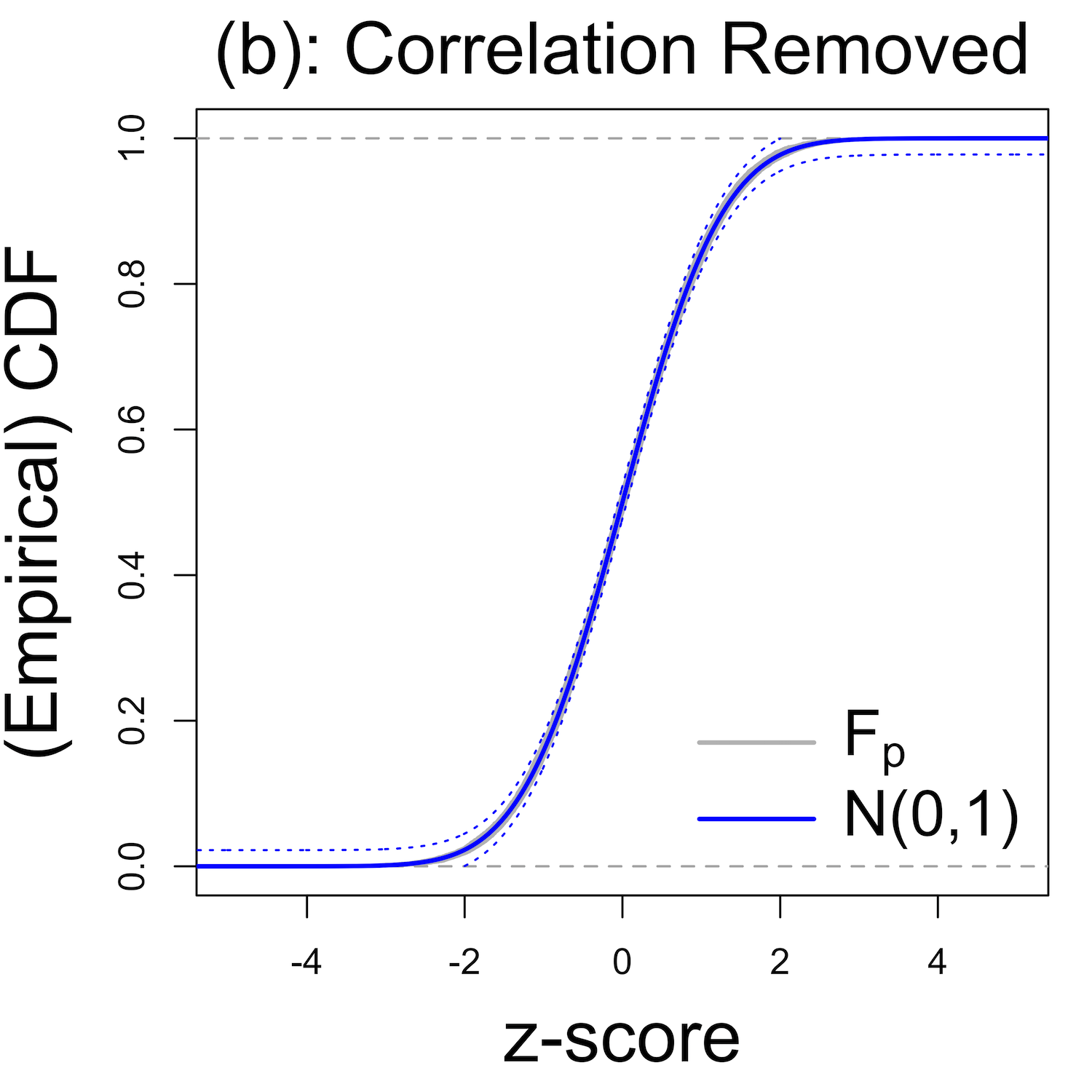}
\includegraphics[width = 0.3\linewidth]{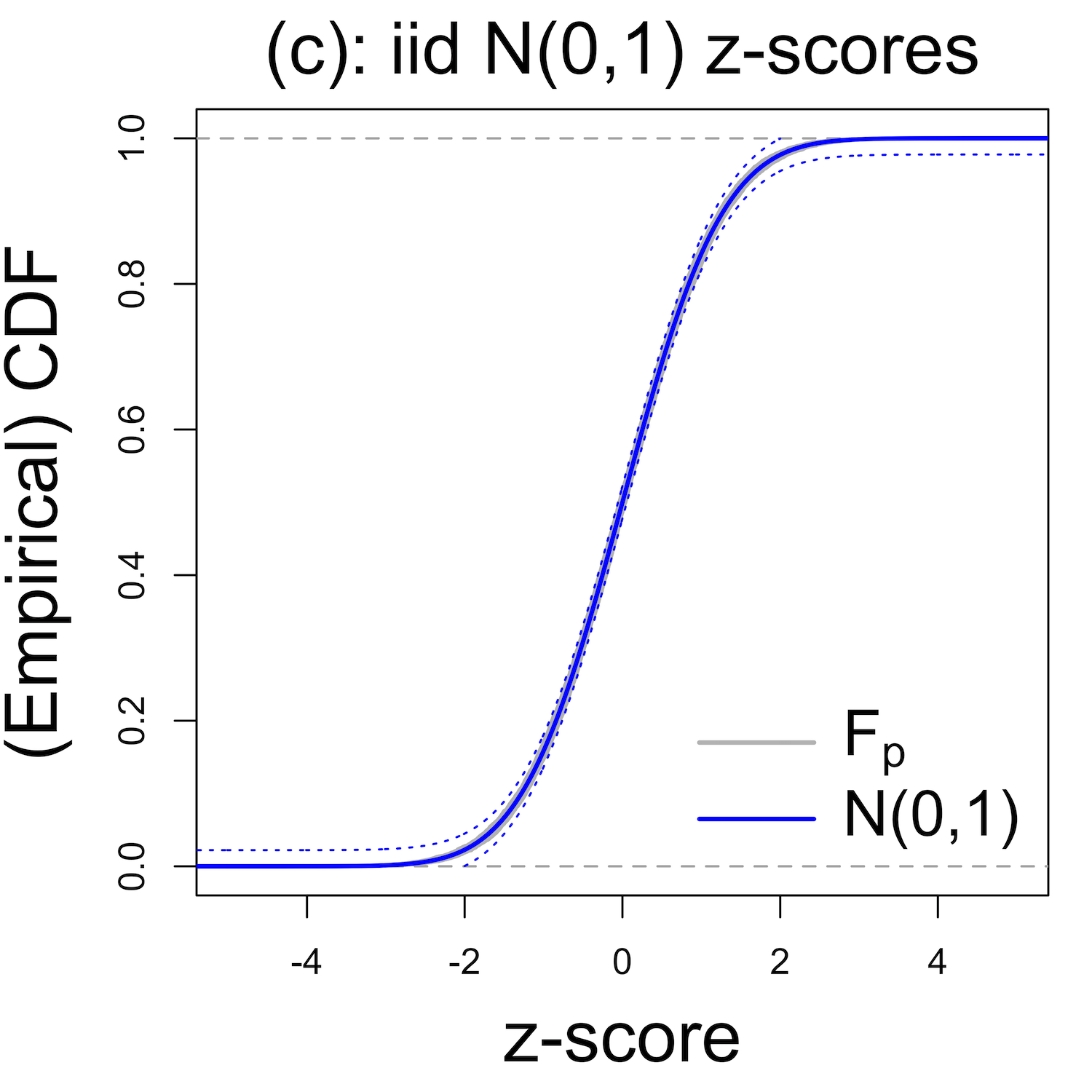}
\caption{Comparison of $10^4$ empirical CDF of $z$-scores ($F_p$) obtained by applying the same analysis pipeline to data simulated by two different frameworks: the original framework in Section \ref{sec:distortion} which keeps gene-gene correlations (panel (a)); and the modified framework to remove gene-gene correlations by randomizing samples for each gene (panel (b)). We also plot $10^4$ empirical CDF of iid $N(0,1)$ samples for comparison (panel (c)). The $z$-scores obtained under the original framework show clear correlation-induced distortion -- the variability of empirical CDF is huge. In contrast, when gene-gene correlations are removed under the modified framework, distortion disappears: empirical CDF are almost exactly the same as $N(0,1)$ and the variability is essentially invisible; indeed, they are indistinguishable from $10^4$ empirical CDF of iid $N(0,1)$ $z$-scores. It shows clear evidence that the analysis pipeline can produce well-calibrated null $z$-scores if no gene-gene correlations. Dotted lines are Dvoretzky-Kiefer-Wolfowitz bounds with $\alpha = 1 / 10^4$.}
\label{fig:rand_gene}
\end{center}
\end{figure*}
%%%%%%%%%%%%%%%%%%%%%%%%%%%%%%%%%%%%%%%%%%%%%%%% 

In addition, Figure \ref{fig:avg_cdf} shows that the mean empirical CDF of the $10^4$ data sets simulated from the original framework -- the average of empirical CDF of Figure \ref{fig:rand_gene}(a) -- is very close to $N(0,1)$. Possible deviation happens only in the far tails ($|\cdot| \in \{5, 6\}$). Compared with $N(0, 1.05^2)$ and $N(0, 1.1^2)$, the deviation is very small even on the logarithmic scale (panels (b-c)), probably caused by numerical constraints as one or two $z$-scores in this area in a few data sets can make a visible difference.
%Figure \ref{fig:avg_cdf} shows that the histogram of all $z$-scores simulated as in Section \ref{sec:distortion} closely matches $N(0,1)$, aggregated over all data sets.

%%%%%%%%%%%%%%%%%%%%%%%%%%%%%%%%%%%%%%%%%%%%%%%%
\begin{figure*}[!htb]
\begin{center}
\includegraphics[width = 0.3\linewidth]{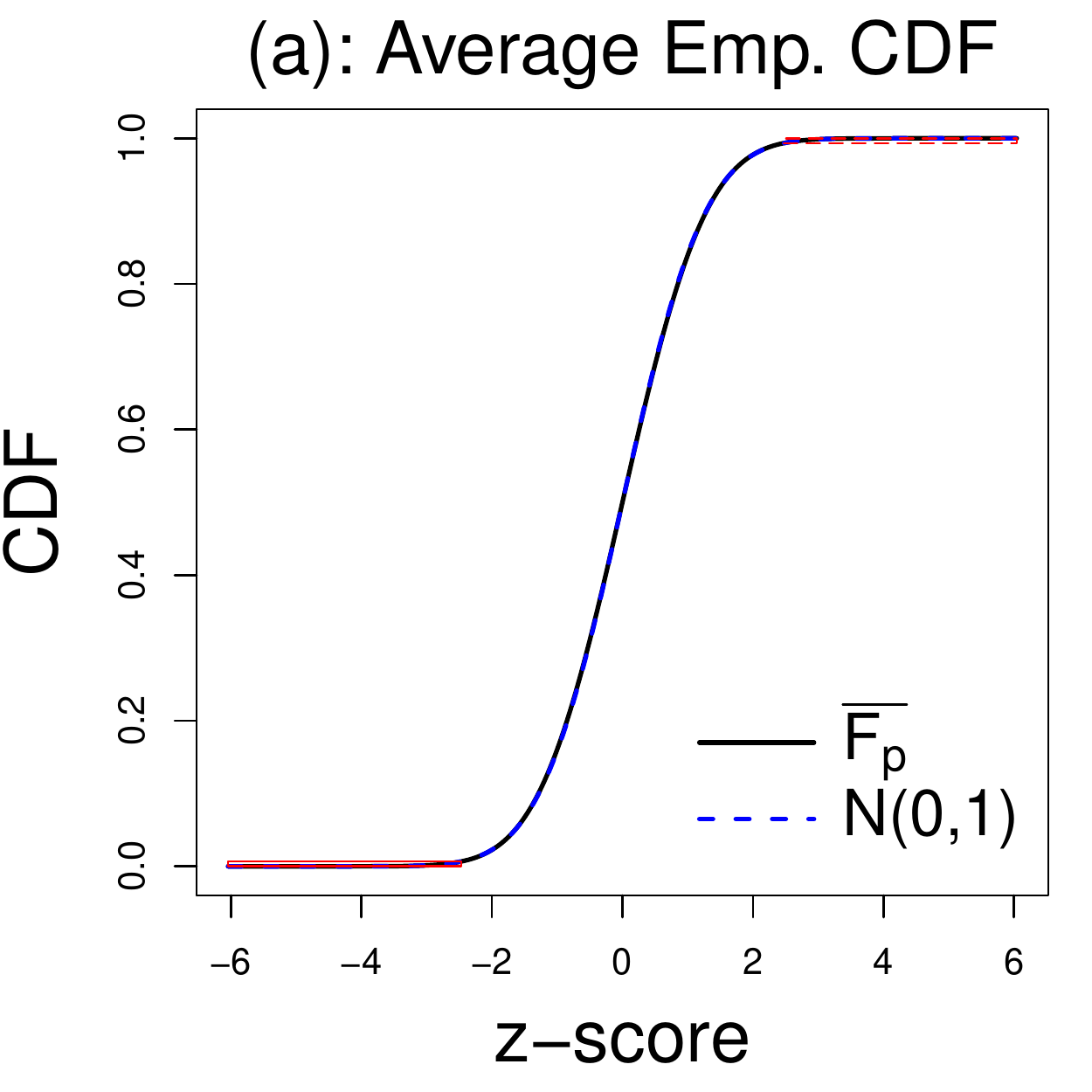}
\includegraphics[width = 0.3\linewidth]{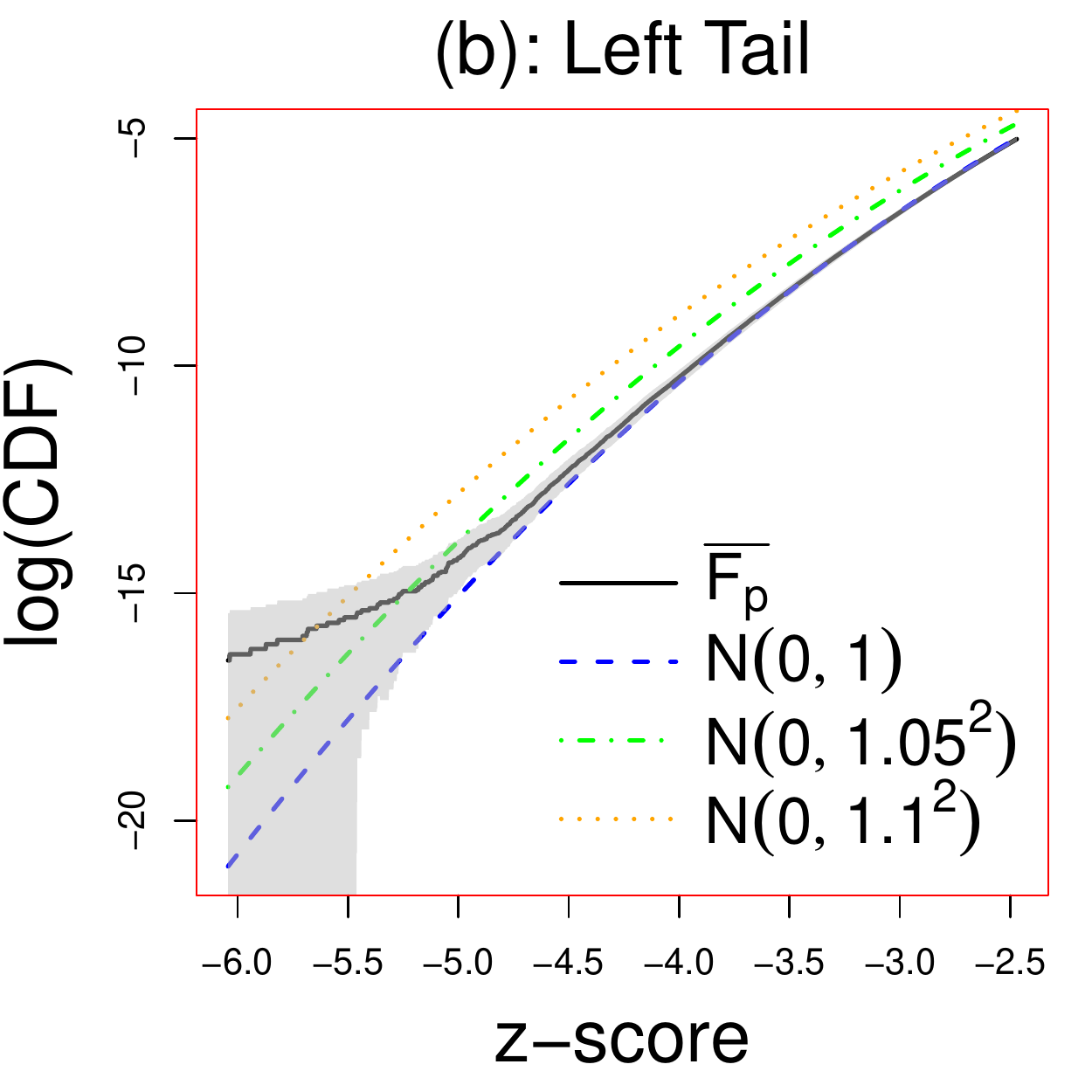}
\includegraphics[width = 0.3\linewidth]{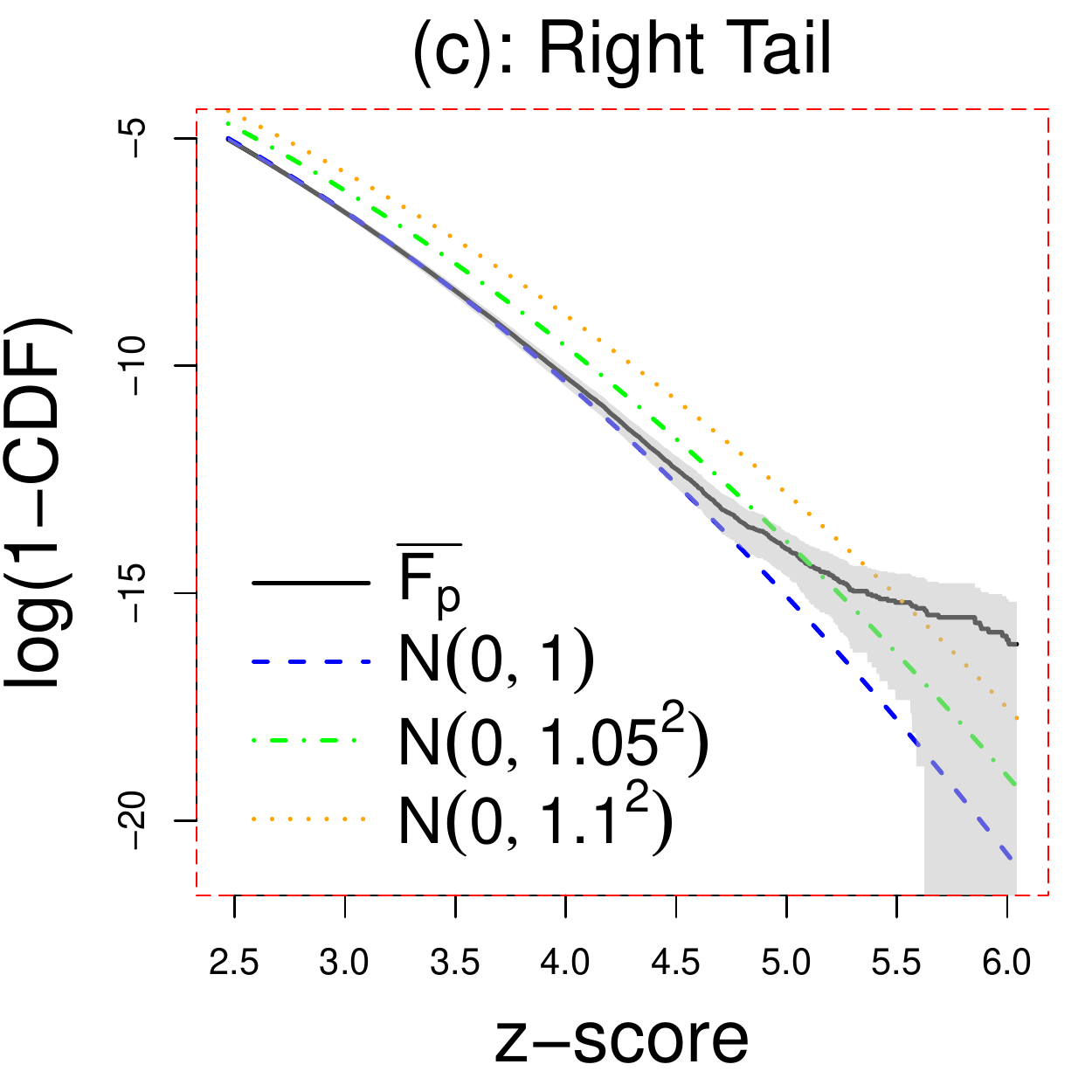}
\caption{Illustration that the average empirical CDF of $z$-scores ($\overline{F_p}$) simulated as in Section \ref{sec:distortion} closely matches $N(0,1)$, aggregated over $10^4$ data sets. Left: the average of all empirical CDF in Figure \ref{fig:rand_gene}(a). The average empirical CDF is extremely close to $N(0,1)$. Center and Right: the left and right tails of the average empirical CDF on logarithmic scale. Shaded areas are $99.9\%$ confidence bands. Compared with $N(0, 1.05^2)$ and $N(0, 1.1^2)$, possible deviation from $N(0,1)$ is light even in the far tails.}
\label{fig:avg_cdf}
\end{center}
\end{figure*}
%%%%%%%%%%%%%%%%%%%%%%%%%%%%%%%%%%%%%%%%%%%%%%%% 

%-------------------------------------------------%
\section{Decomposing Gaussian by standardized Gaussian derivatives} \label{sec:gauss_sgd}
%-------------------------------------------------%	
\begin{proposition}
The PDF of $N(\mu, \sigma^2)$ can be decomposed by standard Gaussian and its derivatives in the form of \eqref{eq:gd_z_1} if and only if $\sigma^2 \leq 2$.
% What about $N\left(\mu, \sigma^2\right)$?
\end{proposition}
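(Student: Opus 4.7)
The plan is to exploit the Hermite polynomial structure underlying the standardized Gaussian derivatives. By Rodrigues' formula, $\varphi^{(l)}(x) = (-1)^l He_l(x)\varphi(x)$, where $He_l$ denotes the probabilists' Hermite polynomial; combined with the classical orthogonality relation $\int He_l(x) He_k(x)\varphi(x)\,dx = l!\,\delta_{lk}$, this shows that $\{\varphi^{(l)}/\sqrt{l!}\}_{l \ge 0}$ is an orthonormal basis of $L^2(\mathbb{R}, 1/\varphi)$. Hence the existence of a decomposition of the form \eqref{eq:gd_z_1} for a density $f$ is, in the natural $L^2(1/\varphi)$ sense, equivalent to the integrability condition $\int f(x)^2/\varphi(x)\,dx < \infty$; by Parseval this integral coincides with $1 + \sum_l w_l^2$.

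For the necessity direction, I would substitute $f(x) = \sigma^{-1}\varphi((x-\mu)/\sigma)$ and compute
\[
\int \frac{f(x)^2}{\varphi(x)}\,dx = \frac{1}{\sigma^2\sqrt{2\pi}}\int \exp\!\left(\frac{x^2}{2} - \frac{(x-\mu)^2}{\sigma^2}\right)dx.
\]
The exponent is quadratic in $x$ with leading coefficient $(\sigma^2-2)/(2\sigma^2)$, so the Gaussian integral converges precisely when this coefficient is non-positive, i.e. $\sigma^2 \le 2$. Finiteness of $\sum_l w_l^2$ is necessary for \eqref{eq:gd_z_1}, giving the bound $\sigma^2 \le 2$.

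For the sufficiency direction (at least on the interior $\sigma^2 < 2$), I would produce the coefficients explicitly. By the orthonormality above,
\[
w_l = \int f(x)\,\frac{\varphi^{(l)}(x)}{\sqrt{l!}}\,\frac{dx}{\varphi(x)} = \frac{(-1)^l}{\sqrt{l!}}\,E[He_l(Y)], \qquad Y \sim N(\mu,\sigma^2),
\]
and the Hermite moment generating identity $E[e^{tY - t^2/2}] = \exp\!\bigl(t\mu + t^2(\sigma^2-1)/2\bigr)$ determines $E[He_l(Y)]$ as the coefficient of $t^l/l!$ on the right-hand side in closed form. Substituting the resulting $w_l$ into \eqref{eq:gd_z_1} and resumming via Hermite generating-function identities (or invoking the Hermite expansion theorem in $L^2(\varphi)$ applied to $f/\varphi$) recovers the $N(\mu,\sigma^2)$ density, completing the iff.

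The main obstacle will be the boundary case $\sigma^2 = 2$: here the integral above is just marginally divergent and the formal coefficients $w_l$ decay only like $l^{-1/4}$, so the decomposition fails in $L^2(1/\varphi)$ but may hold in a weaker sense (pointwise, distributional, or as a limit from the interior $\sigma^2 \uparrow 2$). Matching the ``$\sigma^2 \le 2$'' in the proposition therefore requires first pinning down the intended notion of convergence in \eqref{eq:gd_z_1}; the open versus closed nature of the threshold is otherwise a genuine subtlety of the Hermite expansion on that boundary.
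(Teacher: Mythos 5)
Your proposal follows essentially the same route as the paper: both use the orthogonality of the probabilists' Hermite polynomials against the standardized Gaussian derivatives to identify the coefficients as $w_l = \frac{(-1)^l}{\sqrt{l!}}E_f[h_l]$, and both evaluate these Hermite moments in closed form when $f$ is $N(\mu,\sigma^2)$ (they equal the $l$th moment of a Gaussian with mean $\mu$ and variance $\sigma^2-1$). Where you genuinely diverge is in the criterion for when the decomposition ``exists.'' The paper only requires that the coefficients $w_l$ not explode as $l\to\infty$, which holds iff $|\sigma^2-1|\le 1$, i.e.\ $\sigma^2\le 2$, boundary included. You instead impose the Parseval condition $\sum_l w_l^2 = \int f^2/\varphi\,dx - 1 < \infty$, which is strictly stronger and excludes the boundary: at $\sigma^2=2$ the quadratic term in the exponent of $f^2/\varphi$ vanishes and the integral diverges (so your phrase ``non-positive'' should read ``negative''), and correspondingly $w_l\sim l^{-1/4}$ is bounded but not square-summable --- exactly as you note in your final paragraph. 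That closing observation is the right diagnosis: the ``if and only if $\sigma^2\le 2$'' of the proposition is obtained only under the paper's weaker and admittedly informal notion of decomposability (non-exploding coefficients), whereas under the natural $L^2(1/\varphi)$ notion the threshold is open, $\sigma^2<2$. Your framing buys a precise meaning for convergence of the series and an honest sufficiency argument via Hermite completeness applied to $f/\varphi$, at the cost of disagreeing with the stated closed boundary; the paper's argument matches the stated boundary but never actually establishes convergence of the series to $f$ in any specified sense. Aside from the ``non-positive'' slip, your reasoning is sound and arguably more careful than the original.
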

\begin{proof}
Let $h_l(\cdot)$ denote the $l^\text{th}$ probabilists' Hermite polynomial. The orthogonality and completeness of Hermite polynomials in $L^2(\mathbb{R}, \mathrm{d}\Phi)$ \citep[e.g.][]{szego1975} leads to the following fact
\begin{equation}
\int_\mathbb{R}
\frac{1}{\sqrt{m!}}h_m(x)\frac{1}{\sqrt{n!}}\varphi^{(n)}(x)\mathrm{d}x = (-1)^n\delta_{mn}\ , \qquad \forall m, n = 0, 1, 2, \ldots,
\end{equation}
where $\delta_{mn} = \begin{cases}
1 & m = n \\
0 & \text{otherwise}
\end{cases}$.
Therefore, if any PDF $f$ can be decomposed in the form of \eqref{eq:gd_z_1}, the coefficient of the $l^\text{th}$-order standardized Gaussian derivative has to be
\begin{equation} \label{eq:w_l}
w_l = (-1)^l\int_\mathbb{R}
\frac{1}{\sqrt{l!}}h_l(x)f(x)
\mathrm{d}x = \frac{(-1)^l}{\sqrt{l!}}E_{f}[h_l] \ ,
\end{equation}
where $E_{f}[h_l]$, sometimes called ``Hermite moment,'' is the expected value of $h_l(\cdot)$ when the PDF of the random variable is $f$.  If $f$ is $N(\mu,\sigma^2)$, we can obtain analytic expressions of these Hermite moments
\begin{equation} \label{eq:hermite_moments}
%w_l = \frac{(-1)^l}{\sqrt{l!}}
E_{f}[h_l] = 
%\left[
\mu^l
+
\sum_{k = 1}^{\lfloor l / 2 \rfloor}
\binom{l}{2k}\mu^{l - 2k}(\sigma^2 - 1)^k(2k - 1)!!
%\right]
\defeq
%\frac{(-1)^l}{\sqrt{l!}}
M_l(\mu,\sigma^2-1) \ ,
\end{equation}
where $n!!$ denotes the double factorial of $n$, and $M_l(x,y)$ denotes the function of $l^\text{th}$-order moment of a Gaussian with mean $x$ and variance $y$. Putting \eqref{eq:w_l}-\eqref{eq:hermite_moments} together, the coefficients in \eqref{eq:gd_z_1} become
\begin{equation}
w_l = \frac{(-1)^l}{\sqrt{l!}}M_l(\mu,\sigma^2-1) \ .
\end{equation}
Note that $w_l$ is not exploding if and only if $|\sigma^2 - 1|\leq1$ or equivalently, $\sigma^2\leq2$.
% The PDF of $N\left(0, \sigma^2\right)$
% $$
% \frac{1}{\sqrt{2\pi}\sigma}\exp\left(-\frac{z^2}{2\sigma^2}\right)
% =
% \varphi\left(z\right) + \sum\limits_{l = 1}^\infty \omega_l\frac{1}{\sqrt{l!}}\varphi^{(l)}\left(z\right) \ ,
% $$
% where $
% \omega_l =
% \begin{cases}
% 0 & l \text{ is odd,}\\
% \sqrt{\frac{\left(l - 1\right)!!}{l!!}}\left(\sigma^2 - 1\right)^{l / 2}
% & l \text{ is even.}
% \end{cases}
% $
\end{proof}

This result suggests that a pseudo-inflated Gaussian correlated noise distribution is not likely to have standard deviation greater than $\sqrt{2} \approx 1.4$.

In the special case when $\rho_{ij} = 1$, $f$ becomes $\delta_z$, a point mass on $Z \equiv z$, with $z$ randomly sampled from $N(0, 1)$. It is interesting to note that $\delta_z$ can be decomposed in the form of \eqref{eq:gd_z_1} as
$$ 
\delta_z(\cdot) = \varphi(\cdot) + \sum\limits_{l = 1}^\infty
\left[\frac{(-1)^l}{\sqrt{l!}}h_l(z) \right]
\left[\frac{1}{\sqrt{l!}}\varphi^{(l)}(\cdot)\right] \ ,
\quad
\forall z \in \mathbb{R} \ .
$$
%where $h_l(\cdot)$ is the $l^\text{th}$ probabilists' Hermite polynomial.

%-------------------------------------------------%
\section{Proof of Theorem \ref{theorem:marginal}} \label{sec:app}
%-------------------------------------------------%	 

\begin{proof}
The marginal distribution of $X_j$, denoted as $p(X_j)$, is obtained by integrating out $\theta_j$
\begin{align}
\displaystyle p(X_j) &
= \displaystyle \int_{\mathbb{R}} g(\theta_j) p(X_j|\theta_j, s_j) \mathrm{d}\theta_j 
= \displaystyle \int_{\mathbb{R}}
g(\theta_j)
\frac{1}{s_j}f\left(\frac{X_j - \theta_j}{s_j}\right)
\mathrm{d}\theta_j \nonumber
\\
&=\displaystyle\int_{\mathbb{R}}
\left[\pi_0\delta_0(\theta_j)
+
\sum\limits_{k = 1}^K\pi_k
\frac{1}{\sigma_k}\varphi\left(\frac{\theta_j}{\sigma_k}\right)\right]
\left[\frac{1}{s_j}
\varphi\left(\frac{X_j - \theta_j}{s_j}\right)
+\frac{1}{s_j}
\sum\limits_{l = 1}^L \omega_l
\frac{1}{\sqrt{l!}}
\varphi^{(l)}\left(
\frac{X_j - \theta_j}{s_j}
\right)\right] \mathrm{d}\theta_j \nonumber
\\
&=
\sum\limits_{k = 0}^K\pi_k
\left(
p_{jk0}
+
\sum\limits_{l = l}^L\omega_l p_{jkl}
\right) \ ,
\label{eq:cash_marginal}
\end{align}
% =\int_{\mathbb{R}} \sum\limits_k\pi_k g_k\left(\theta_j\right)
% \frac{1}{s_j}
% \sum\limits_l w_l\varphi^{(l)}\left(
% \frac{x_j - \theta_j}{s_j}
% \right)/\sqrt{l!} \mathrm{d}\theta_j\\
% \sum\limits_k\pi_k\sum\limits_lw_l
% \int_{\mathbb{R}}
% g_k\left(\theta_j\right)
% \frac{1}{\sqrt{l!}}
% \varphi^{(l)}\left(
% \frac{x_j - \theta_j}{s_j}
% \right) \mathrm{d}\theta_j
%  \ ,
% =
% g \circledast f
% \leadsto
% \Sigma_k\Sigma_l\pi_k w_l
% \left(
% g_k
% \circledast
% \varphi^{\left(l\right)}
% \right)
%  \ .
where
$
p_{jkl} = \displaystyle \int_{\mathbb{R}}
\frac{1}{\sigma_k}\varphi\left(\frac{\theta_j}{\sigma_k}\right)
\frac{1}{s_j}
\frac{1}{\sqrt{l!}}\varphi^{(l)}\left(
\frac{X_j - \theta_j}{s_j}
\right) \mathrm{d}\theta_j
$ is essentially a convolution of $\varphi$ and $\varphi^{(l)}$ and has an analytic form
$$
\displaystyle p_{jkl} = 
\frac{s_j^l}{\sqrt{\sigma_k^2 + s_j^2}^{l+1}}
\frac{1}{\sqrt{l!}}
\varphi^{(l)}\left(\frac{
X_j
}{
\sqrt{\sigma_k^2 + s_j^2}
}\right) \ .
$$
This form is also valid for $k = 0$, $l = 0$.
Following \eqref{eq:cash_marginal}, the marginal log-likelihood of $\pi,\omega$ is given by
$$
\log\left(\prod\limits_{j = 1}^n p\left(X_j\right)\right)
% =\log\left(\prod\limits_j \left(\sum\limits_k\sum\limits_l\pi_k w_l p_{jkl} \right)\right)
=\sum\limits_{j = 1}^n\log\left(
\sum\limits_{k = 0}^K\pi_k
\left(
p_{jk0}
+
\sum\limits_{l = l}^L\omega_l p_{jkl}
\right)
\right) \ .
$$
\end{proof}

\section{Simulation details} \label{sec:sim_detail}

\subsection{Six choices of the non-null effect distribution}

Table \ref{table:g1} lists the details of the six choices of $g_1$, the non-null effects in Section \ref{sec:sim}. The table also shows the average signal strength, $E[|\cdot|^2]$, and the probability of large signal, $\text{Pr}(|\cdot|\ge\sqrt{2\log p})$, conditioned on $g_1$.

\begin{table}[!htb]
\centering
\caption{Details of $g_1$, the distribution of non-null effects in Section 4.1}\label{table:g1}
\begin{tabularx}{\textwidth}{c | X | c | c }
$g_1$ & PDF 
& $E[|\cdot|^2]$
&
$\text{Pr}(|\cdot|\ge\sqrt{2\log p}$) \\
\hline
Gaussian & $N(0, 2^2)$ & 4 & 0.032 \\
Near Gaussian & $0.6 N(0, 1) + 0.4 N(0, 3^2)$ & 4.2 & 0.061\\
Spiky & $0.4 N(0, 0.5^2) + 0.2 N(0, 2^2) + 0.4 N(0, 3^2)$ & 4.5 & 0.067\\
Skew & $0.25N(-2, 2^2) + 0.25N(-1, 2^2) + 0.25N(0, 1) + 0.25N(1, 1)$ & 4 & 0.045\\
Flat Top & $0.5N(-1.5, 1.5^2) + 0.5N(1.5, 1.5^2)$ & 4.5 & 0.031 \\
Bimodal & $0.5 N(-1.5, 1) + 0.5 N(1.5, 1)$ & 3.25 & 0.0026\\
\end{tabularx}
\end{table}

% Gaussian: $g_1 = N(0, 2^2)$
% Near Gaussian: $g_3 = 0.6 N(0, 1) + 0.4 N(0, 3^2)$
% Spiky: $g_4 = 0.4 N(0, 0.5^2) + 0.2 N(0, 1) + 0.2 N(0, 2^2) + 0.2 N(0, 3^2)$
% Skew: $g_5 = 1/4 N(-2, 2^2) + 1/4 N(-1, 2^2) + 1/4 N(0, 1) + 1 / 4 N(1, 1)$
% Flattop: $g_6 = \frac17[N(-1.5, 0.5^2) + N(-1, 0.5^2) + N(-0.5, 0.5^2) + N(0, 0.5^2) + N(0.5, 0.5^2) + N(1, 0.5^2) + N(1.5, 0.5^2)]$
% Bimodal: $g_7 = 0.5 N(-1.5, 1) + 0.5 N(1.5, 1)$

\subsection{Implementation of methods}

The existing methods we use for comparison in this paper mostly use the default settings in their respective \R{} packages. That include \REBayes{}, \deconvolveR{} for deconvolution (Section \ref{sec:distortion}), and \qvalue{}, \locfdr{} for multiple testing (Section \ref{sec:examples}). For \EbayesThresh{}, we set \texttt{a=NA} to allow the scale parameter of the Laplace distribution to be estimated from the data. For \ash{}, we set \texttt{mixcompdist="normal"} to use scale mixture of zero-mean Gaussians to approximate $g$.

\subsection{Pipeline for analyzing gene expression data}

Let $\theta_j$ denote the true $\log_2$-fold change in gene expression for each gene $j$. The analysis pipeline is used to provide, for each $\theta_j$, an estimate $X_j$ with a standard error $s_j$, such that $X_j$ can be assumed to be $N(\theta_j, s_j^2)$.

For RNA-seq data such as the mouse data, the analysis pipeline is described in Section \ref{sec:distortion}.

For microarray data such as the leukemia data, we use a widely-used analysis protocol implemented in the \limma{} software \citep{limma}. This yield an estimate $X_j$ for $\theta_j$, and a corresponding $p$-value $p_j$ from a moderated $t$-statistic \citep{smyth2004}. Then as in Section \ref{sec:distortion}, we convert the $p$-value to the corresponding $z$-score $z_j$ and use it to compute the effective standard deviation $s_j$.

\subsection{Reproducibility}

All the code generating the results and plots in this paper are available at \url{https://github.com/LSun/cashr_paper}.

The RNA-seq gene expression data from human liver tissues we use in this paper were generated by the GTEx Project, which was supported by the Common Fund of the Office of the Director of the National Institutes of Health, and by NCI, NHGRI, NHLBI, NIDA, NIMH, and NINDS. The data used for the analyses described in this paper were obtained from the GTEx Portal at \url{https://www.gtexportal.org}. In particular, the human liver RNA-seq data for realistic simulation are also available at \url{https://github.com/LSun/cashr_paper}.

The leukemia microarray data are available at \url{http://statweb.stanford.edu/~ckirby/brad/LSI/datasets-and-programs/datasets.html}.

The mouse heart RNA-seq data are available at \url{https://github.com/LSun/cashr_paper}.

\section{Representation of the correlated noise distribution} \label{sec:schwartzman2010}

If $Z$ are independent and $p$ is large then $F_p$ will be close to its mean, $\Phi$. This is guaranteed by well-established results like the Glivenko-Cantelli theorem and the Dvoretzky-Kiefer-Wolfowitz inequality \citep[e.g.][]{wasserman2006}. However, when $Z$ are correlated $F_p$ can be grossly different from $\Phi$, as we have seen in Section \ref{sec:distortion}. 
The covariance of $F_p$ indicates how far it tends to stray from its mean, $\Phi$, and therefore captures the extent of correlation-induced distortion.
\cite{schwartzman2010} provides the following elegant characterization of the covariance of $F_p$. For completeness we also put it here.

\begin{proposition} \citep[The mean, variance, and covariance functions of $F_p$;][]{schwartzman2010}

Assume $\forall i \neq j$, $\begin{bmatrix}Z_i \\ Z_j\end{bmatrix} \sim N\left(0, \begin{bmatrix}
1 & \rho_{ij} \\ \rho_{ij} & 1
\end{bmatrix}\right)$.
Let $\overline{\rho^l} \defeq \frac{1}{p\left(p - 1\right)}\sum\limits_{i, j : i \neq j}\rho_{ij}^l$. 
%Let $\Sigma_Z \defeq \left[\rho_{ij}\right]_{p \times p}$ denote the covariance (correlation) matrix of $Z$.
Then $\forall x, y \in \mathbb{R}$,
\begin{align}
E(F_p(x)) &= \Phi(x) \label{eq:mean_Fp}\\
var(F_p(x))
&=
\left(1 - \frac1p\right)
\sum\limits_{l = 1}^\infty \overline{\rho^l}\left[\frac{1}{\sqrt{l!}}\varphi^{(l - 1)}(x)\right]^2 + \frac1p\Phi(x)(1 - \Phi(x)) \label{eq:var_Fp} \\
cov(F_p(x), F_p(y)) &= \left(1 - \frac1p\right)
\sum\limits_{l = 1}^\infty \overline{\rho^l}
\left[\frac{1}{\sqrt{l!}}\varphi^{(l - 1)}(x)\right]
\left[\frac{1}{\sqrt{l!}}\varphi^{(l - 1)}(y)\right] + \frac1p[\Phi(\min(x, y)) - \Phi(x)\Phi(y)] 
\label{eq:cov_Fp}
\end{align}
\label{prop:cov_Fn}
\end{proposition}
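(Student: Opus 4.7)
The mean relation is immediate: by linearity of expectation and the assumed marginal $N(0,1)$ distribution of each $Z_j$,
\[
E[F_p(x)] = \frac{1}{p}\sum_{j=1}^p P(Z_j \le x) = \Phi(x).
\]
Since the variance is the special case $x=y$ of the covariance, I would focus on the covariance formula. Writing $F_p(x) = \frac{1}{p}\sum_j \mathcal{I}(Z_j \le x)$, bilinearity of covariance gives
\[
\text{cov}(F_p(x), F_p(y)) = \frac{1}{p^2}\sum_{i,j}\text{cov}\!\left(\mathcal{I}(Z_i\le x),\mathcal{I}(Z_j\le y)\right),
\]
which I would split into the $p$ diagonal terms ($i=j$) and the $p(p-1)$ off-diagonal terms ($i\neq j$).

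For the diagonal piece, $\text{cov}(\mathcal{I}(Z_i\le x),\mathcal{I}(Z_i\le y)) = P(Z_i\le\min(x,y)) - \Phi(x)\Phi(y) = \Phi(\min(x,y)) - \Phi(x)\Phi(y)$, which after dividing by $p^2$ yields the $\tfrac{1}{p}[\Phi(\min(x,y))-\Phi(x)\Phi(y)]$ contribution in \eqref{eq:cov_Fp}. The off-diagonal piece is where the Gaussian-derivative basis enters. Here I would invoke Mehler's formula for the bivariate normal density of $(Z_i,Z_j)$ with correlation $\rho_{ij}$:
\[
\varphi_{\rho_{ij}}(s,t) = \varphi(s)\varphi(t)\sum_{l=0}^\infty \frac{\rho_{ij}^l}{l!} h_l(s)h_l(t),
\]
where $h_l$ is the $l^\text{th}$ probabilists' Hermite polynomial. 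Using Rodrigues' identity $\varphi^{(l)}(x) = (-1)^l h_l(x)\varphi(x)$, one has $\int_{-\infty}^x h_l(t)\varphi(t)\,\mathrm{d}t = (-1)^l\varphi^{(l-1)}(x)$ for $l\ge 1$ (since $\varphi^{(l-1)}$ vanishes at $-\infty$). Integrating the Mehler expansion termwise over $(-\infty,x]\times(-\infty,y]$ then yields
\[
P(Z_i\le x, Z_j\le y) - \Phi(x)\Phi(y) = \sum_{l=1}^\infty \rho_{ij}^l \left[\frac{1}{\sqrt{l!}}\varphi^{(l-1)}(x)\right]\left[\frac{1}{\sqrt{l!}}\varphi^{(l-1)}(y)\right],
\]
which is precisely the covariance of the off-diagonal indicators.

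Finally, substituting back and collecting the off-diagonal sum using the definition $\overline{\rho^l}=\frac{1}{p(p-1)}\sum_{i\neq j}\rho_{ij}^l$, the factor $p(p-1)/p^2 = 1 - 1/p$ emerges in front of the Gaussian-derivative series, giving \eqref{eq:cov_Fp}. Setting $x=y$ gives \eqref{eq:var_Fp}, using $\Phi(\min(x,x))-\Phi(x)^2 = \Phi(x)(1-\Phi(x))$.

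The main technical obstacle is the Mehler expansion step: justifying the interchange of the infinite sum with the double integral, which relies on the completeness and orthonormality of Hermite polynomials in $L^2(\mathbb{R},\mathrm{d}\Phi)$ (as already used in Appendix \ref{sec:gauss_sgd}) together with absolute convergence of Mehler's series for $|\rho_{ij}|<1$. The degenerate case $|\rho_{ij}|=1$ can be handled as a limit, or separately by noting that $Z_i\equiv\pm Z_j$ almost surely. All remaining manipulations are algebraic bookkeeping.
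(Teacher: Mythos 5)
Your proposal is correct and follows essentially the same route as the paper's proof: decompose the covariance of $F_p$ into diagonal ($i=j$) and off-diagonal ($i\neq j$) indicator terms, handle the diagonal part with the elementary identity $\Phi(\min(x,y))-\Phi(x)\Phi(y)$, and handle the off-diagonal part by integrating Mehler's expansion of the bivariate normal density to get the joint CDF as a series in $\rho_{ij}^l$ with Gaussian-derivative factors. The only cosmetic differences are that you state Mehler's identity in Hermite-polynomial form and convert via Rodrigues' formula (the paper writes it directly in terms of $\varphi^{(l)}$), and that you explicitly flag the termwise-integration justification, which the paper leaves implicit.
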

% where $\overline{\rho_{ij}^l} \defeq \frac{1}{n\left(n - 1\right)}\sum\limits_{i \neq j}\rho_{ij}^l$, the $l^\text{th}$ moment of the off-diagonal pairwise correlations in $\Sigma_Z$, and $\varphi^{\left(l-1\right)}$ is the $\left(l - 1\right)^\text{th}$ derivative of $\varphi$\footnote{In this paper we write Gaussian derivatives in their standardized form $\frac{1}{\sqrt{l!}}\varphi^{(l)}\left(\cdot\right)$, as
% $$
% \int_{-\infty}^\infty
% \left(\frac{1}{\sqrt{l!}}\varphi^{(l)}\left(z\right)\right)^2\varphi^{-1}\left(z\right)
% \ dz
% = 1
% $$
% }.
\begin{proof}
The mean function is straightforward. The covariance function
\begin{align}
cov(F_p(x), F_p(y)) &= cov\left(\frac1p\sum\limits_{i = 1}^p \mathcal{I}(Z_i \leq x), \frac1p\sum\limits_{j = 1}^p \mathcal{I}(Z_j \leq y)\right) \nonumber
\\
&= E\left[\left(\frac1p\sum\limits_{i = 1}^p \mathcal{I}(Z_i \leq x)\right)\left(\frac1p\sum\limits_{j = 1}^p \mathcal{I}(Z_j \leq y)\right)\right] - E\left[\frac1p\sum\limits_{i = 1}^p \mathcal{I}(Z_i\leq x)\right]E\left[\frac1p\sum\limits_{j = 1}^p \mathcal{I}(Z_j \leq y)\right] \nonumber
\\
&=\frac1{p^2}\sum\limits_{i = 1}^p\sum\limits_{j = 1}^p E[\mathcal{I}(Z_i \leq x)\mathcal{I}(Z_j \leq y)] - \Phi(x)\Phi(y) \nonumber \\
&= \frac1{p^2}\sum\limits_{i = 1}^p\sum\limits_{j = 1}^p P(Z_i \leq x, Z_j \leq y) - \Phi(x)\Phi(y) \nonumber\\
&= \frac1{p^2}\sum\limits_{i\neq j}P(Z_i \leq x, Z_j \leq y)
+\frac1p \Phi(\min(x,y))
-\Phi(x)\Phi(y) \ .
\label{eq:cov_Fp_2}
\end{align}
According to Mehler's identity \citep{kibble1945}, under the assumption of $\{Z_i, Z_j\}$ being bivariate normal, the joint PDF can be written as
\begin{align}
p(x, y) = \varphi(x)\varphi(y)
+ \sum\limits_{l = 1}^\infty \rho_{ij}^l
\left[\frac{1}{\sqrt{l!}}\varphi^{(l)}(x)\right]
\left[\frac{1}{\sqrt{l!}}\varphi^{(l)}(y)\right] \ ,
\end{align}
so the joint CDF is
\begin{align}
P(Z_i \leq x, Z_j \leq y) = \Phi(x)\Phi(y)
+ \sum\limits_{l = 1}^\infty \rho_{ij}^l
\left[\frac{1}{\sqrt{l!}}\varphi^{(l - 1)}(x)\right]
\left[\frac{1}{\sqrt{l!}}\varphi^{(l - 1)}(y)\right] \ .
\label{eq:bvn_cdf}
\end{align}
\eqref{eq:cov_Fp_2} and \eqref{eq:bvn_cdf} lead to the covariance function \eqref{eq:cov_Fp}. Setting $x = y$ gives the variance function \eqref{eq:var_Fp}.
\end{proof}

Note that $var(F_p)$ has two parts. The second part $\frac1p\Phi(z)(1 - \Phi(z))$ is the familiar variance function when $Z$ are independent, and it quickly vanishes as $p$ increases. This is why $F_p$ of iid $N(0, 1)$ sample will not deviate much from $\Phi$ when $p$ is large. In contrast, the first part
\begin{align}
\left(1 - \frac1p\right)\sum\limits_{l = 1}^\infty \overline{\rho^l}\left(\frac{1}{\sqrt{l!}}\varphi^{(l - 1)}(x)\right)^2
\end{align}
demonstrates the effect of correlation. If $\overline{\rho^l}$ is non-negligible for large $p$, $var(F_p)$ will be non-vanishing, and so $F_p$ and the histogram of $Z$ are more likely to deviate substantially from $N(0,1)$.

% Note that it is the moments of $\rho_{ij}$'s, not individual $\rho_{ij}$'s, that are relevant here. To have a tangible effect on $var\left(F_n\right)$, $\overline{\rho_{ij}^l}$ needs to be non-vanishing for large $n$, at least when $l$ is small. This happens when we have pervasive, but not necessarily strong, pairwise correlations. For example, when there is an underlying low-rank structure in the data, so that all the $z$-scores depend on the same few factors, they will essentially be correlated with one another, and the average pairwise correlation will be non-negligible.  In contrast, if there exist very strong pairwise correlations, yet the moments of all correlations are negligible, $var(F_n)$ will still not be very different from zero, making $F_n$ close to its mean $\Phi$. For example, suppose $n$ is even, and let $\left\{Z_j\right\}_{1}^n$ be in $n / 2$ pairs. Now if we assume each pair has correlation almost one, while different pairs are independent, the histogram of $\left\{Z_j\right\}_{1}^n$ should still look very much like the standard normal, because $\overline{\rho_{ij}^l} \approx \frac{1}{n - 1}$. In other words, not all kinds of correlation, even large ones, are able to distort the empirical distribution. We may call this distinction as ``global correlation" and ``local correlation" in practice. \ms{Are there previous definitions of these terms in the literature?}

%The magnitudes of $\overline{\rho_{ij}^l}$ should decay roughly exponentially with respect to $l$.
When $p$ is large,
%the covariance function \eqref{eq:cov_Fp} approximates the following simplified form,
\begin{align}
cov(F_p(x), F_p(y)) \approx \sum\limits_{l = 1}^\infty \overline{\rho^l}
\left[\frac{1}{\sqrt{l!}}\varphi^{(l - 1)}(x)\right]
\left[\frac{1}{\sqrt{l!}}\varphi^{(l - 1)}(y)\right] \ .
\label{eq:cov_Fp_simple}
\end{align}
\eqref{eq:mean_Fp} and \eqref{eq:cov_Fp_simple} suggest we can characterize $F_p$ as \eqref{eq:Fbasis} \citep{schwartzman2010}, assuming $\overline{\rho^l} \ge 0$ for all $l \in \mathbb{N}$. This assumption should not be too demanding for large $p$ in practice. For example, when $l = 1$, \begin{align}
\overline{\rho} = \frac{1}{p(p - 1)}\sum\limits_{i \neq j}\rho_{ij} = \frac{1}{p(p - 1)}(\mathbf{1}^T\Sigma_Z\mathbf{1} - p) \geq \frac{1}{p(p - 1)}(-p) = -\frac{1}{p - 1} \ ,
\end{align}
following the fact that $\Sigma_Z$, the correlation matrix of $Z$, is positive semi-definite.

\bibliographystyle{cashr.bst}
\bibliography{cashr.bib}

\end{document}